\title{Determinization of Büchi Automata: Unifying the Approaches of Safra and Muller-Schupp}
\titlerunning{Determinization of Büchi Automata}
\author{Christof Löding}{RWTH Aachen University, Ahornstr. 55, 52074 Aachen, Germany}{loeding@cs.rwth-aachen.de}{}{}
\author{Anton Pirogov}{RWTH Aachen University, Ahornstr. 55, 52074 Aachen, Germany}{pirogov@cs.rwth-aachen.de}{https://orcid.org/0000-0002-5077-7497}{This author is supported by the German research council (DFG) Research Training Group 2236 UnRAVeL}
\authorrunning{C. Löding and A. Pirogov}
\keywords{Büchi automata, determinization, parity automata}
\newcommand{\triang}{\ensuremath{{\bigtriangleup}}}
\def\mcirc{\ensuremath{{\mathbin{\scalerel*{\circ}{k}}}}}
\newcommand{\hollowstar}{\ensuremath{\text{\FiveStarOpen}}}
\algnewcommand{\IIf}[2]{\State #2\ \algorithmicif\ #1}
\tikzset{every state/.style={minimum size=0pt}}
\newcommand{\convexpath}[2]{
  [
  create hullcoords/.code={
    \global\edef\namelist{#1}
    \foreach [count=\counter] \nodename in \namelist {
      \global\edef\numberofnodes{\counter}
      \coordinate (hullcoord\counter) at (\nodename);
    }
    \coordinate (hullcoord0) at (hullcoord\numberofnodes);
    \pgfmathtruncatemacro\lastnumber{\numberofnodes+1}
    \coordinate (hullcoord\lastnumber) at (hullcoord1);
  },
  create hullcoords
  ]
  ($(hullcoord1)!#2!-90:(hullcoord0)$)
  \foreach [
  evaluate=\currentnode as \previousnode using \currentnode-1,
  evaluate=\currentnode as \nextnode using \currentnode+1
  ] \currentnode in {1,...,\numberofnodes} {
    let \p1 = ($(hullcoord\currentnode) - (hullcoord\previousnode)$),
    \n1 = {atan2(\y1,\x1) + 90},
    \p2 = ($(hullcoord\nextnode) - (hullcoord\currentnode)$),
    \n2 = {atan2(\y2,\x2) + 90},
    \n{delta} = {Mod(\n2-\n1,360) - 360}
    in
    {arc [start angle=\n1, delta angle=\n{delta}, radius=#2]}
    -- ($(hullcoord\nextnode)!#2!-90:(hullcoord\currentnode)$)
  }
}
\newcommand{\op}[1]{\ensuremath{\mathsf{#1}}}
\newcommand{\mc}[1]{\mathcal{#1}}
\newcommand{\parent}{\op{\uparrow}}
\newcommand{\lsibling}{\op{\leftarrow}}
\newcommand{\cut}[1]{#1\text{-}\op{cut}}
\newcommand{\cupdot}{\mathbin{\mathaccent\cdot\cup}}
\newcommand*{\HASAPPENDIX}{} \newcommand{\withappendix}[1]{\ifthenelse{\isundefined{\HASAPPENDIX}}{}{#1}}
\begin{document}
\maketitle

\vspace{7mm}
\begin{abstract}
  Determinization of Büchi automata is a long-known difficult problem, and after the seminal
  result of Safra, who developed the first asymptotically optimal construction from Büchi
  into Rabin automata, much work went into improving, simplifying, or avoiding Safra's
  construction. A different, less known determinization construction was proposed
  by Muller and Schupp. The two types of constructions share some similarities but their
  precise relationship was still unclear.
  In this paper, we shed some light on this relationship by
  proposing a construction from nondeterministic Büchi to
  deterministic parity automata that subsumes both constructions: Our
  construction leaves some freedom in the choice of the successor
  states of the deterministic automaton, and by instantiating these
  choices in different ways, one obtains as particular cases the
  construction of Safra and the construction of Muller and Schupp. The
  basis is a correspondence between structures that are encoded in the
  macrostates of the determinization procedures---Safra trees on one
  hand, and levels of the split-tree, which underlies the Muller and
  Schupp construction, on the other hand. Our construction also allows
  for mixing the mentioned constructions, and opens up new directions
  for the development of heuristics.

\end{abstract}

\vspace{7mm}
\section{Introduction}
\label{sec:introduction}
Büchi automata are finite automata for infinite words, and were initially introduced
to show decidability of the logic S1S \cite{buchi1966symposium}. Infinite words can be
used to model infinite execution traces of reactive, non-terminating systems, and serve as a translation target
from logics like LTL (see, e.g., \cite{SomenziB00,GastinO01}), which is a popular and well-understood specification formalism.
For this reason,  Büchi
automata nowadays play a central role in formal methods like model-checking \cite{baier2008principles} and
runtime-verification \cite{GiannakopoulouH01}, because they can represent all $\omega$-regular languages and are
suitable for efficient algorithmic treatment.
Unfortunately, the simplicity of the Büchi acceptance condition makes
it crucially dependent on nondeterminism, i.e., not every
$\omega$-regular language (or LTL formula) can be accepted by a
deterministic Büchi automaton (see, e.g., \cite{Thomas97}). In some
settings, this nondeterminism causes difficulties, such that
algorithms require a representation of the property by a deterministic
automaton, like in probabilistic model-checking (see, e.g.,
\cite[Section~10.3]{baier2008principles}), or in synthesis (see
\cite{Thomas08} for an overview of the theory, and \cite{MeyerSL18}
for recent developments in practice).

A first determinization procedure that translates nondeterministic
Büchi automata into deterministic automata was presented in
\cite{mcnaughton1966testing}. The first asymptotically optimal and
most well-known determinization construction for Büchi automata is the
construction of Safra \cite{safra1988complexity}. It translates a
nondeterministic Büchi automaton with $n$ states into a deterministic
Rabin automaton with at most $2^{\mathcal{O}(n \log n)}$ states and
  $\mathcal{O}(n)$ sets in the acceptance condition. In applications
  like synthesis, the deterministic automaton is used to build a game
  that inherits as winning condition the acceptance condition of the
  automaton. In the theory of infinite duration games, the parity
  condition plays a central role (see, e.g., the survey
  \cite{VardiW07}). For this reason, Piterman modified Safra's
  construction in order to directly obtain a parity automaton
  \cite{piterman2006nondeterministic}. This construction was
  reformulated in \cite{schewe2009tighter}, where also a tighter
  analysis of its state complexity is given with an upper bound of $\mathcal{O}(n!^2)$ for the number of states. A similar
  construction is presented in \cite{redziejowski2012improved},
  adapted to the translation of $\omega$-regular expressions directly
  into parity automata.

 It is known that the Safra construction is essentially optimal
 \cite{ColcombetZ09}, so there is no hope of significantly improving the worst-case
 upper bounds of the known constructions.  However, the data structure
 of Safra trees (or history trees) that is used for the states of the
 deterministic automata, is challenging to deal with in
 implementations. Therefore, alternative approaches for
 determinization have been studied, leading to a family of
 constructions that are based on a construction by Muller and Schupp,
 which appeared in \cite{muller1995simulating} as a by-product of a
 translation from alternating to non-deterministic tree automata.  An
 explicit description of the construction specifically for
 determinization of Büchi automata is presented in
 \cite{kahler2008complementation}.
   A refinement of that construction is presented in
 \cite{fogarty2015profile}, in which the states of the deterministic
 automaton are no longer represented as trees but as ordered and
 labelled tuples of sets.

 The two approaches of Safra and Muller-Schupp show some similarities,
 as pointed out in the conclusion of \cite{fogarty2015profile}, but
 from the existing formulations of the constructions, their precise
 relationship is not clear.

 In this paper, we provide a construction for transforming
 nondeterministic Büchi automata into deterministic parity automata
 that cleanly explains the connections between the approaches of Safra
 and Muller-Schupp. It turns out that both types of constructions can
 be formulated on the same data structure, which can either be
 understood as ordered tuples of sets in which each set has an
 additional rank (a natural number), or as Safra trees in which each
 node has an additional rank (the same structure is essentially used in
 the constructions from \cite{piterman2006nondeterministic} and
 \cite{schewe2009tighter}).
 The transitions are defined in terms of a sequence of simple
 operations, and it turns out that the two constructions only differ
 in one of these operations.
In summary, our contributions are the following:
 \begin{itemize}
 \item We provide a new and relatively simple formulation of a
   Muller-Schupp style determinization construction that yields
   deterministic parity automata. Compared to previous constructions
   from \cite{kahler2008complementation} and \cite{fogarty2015profile},
   we encode less information in the states, and obtain a construction
   that has the same worst-case upper bound as the Safra style
   constructions.
 \item We extend our Muller-Schupp style construction by introducing
   a degree of freedom in the choice of the successor states. This freedom
   can be used to make the construction correspond to     Safra's construction as presented in
   \cite{piterman2006nondeterministic} and
   \cite{schewe2009tighter}. We therefore obtain a construction
   that unifies the approaches of Safra and Muller-Schupp in one
   general construction.  Furthermore, the freedom in the choice of
   the successors of transitions also yields new ways of obtaining
   deterministic parity automata, and can be used in implementations
   as a heuristic to reduce the state space of the resulting
   automaton.
 \end{itemize}

This work is organized as follows. After introducing the basic notations in
Section~\ref{sec:preliminaries}, we present the new variant of the Muller-Schupp
construction in Section~\ref{sec:mullerschupp}, and then briefly review Safra's
construction in Section~\ref{sec:safra}. We explain the structural relationship
between those two constructions in Section~\ref{sec:bijection}, and finally introduce our
generalized construction as a simple extension of the presented Muller-Schupp
construction in Section~\ref{sec:construction}. In Section~\ref{sec:discussion} we discuss and conclude.

 \section{Preliminaries}
\label{sec:preliminaries}

First we briefly review basic definitions concerning $\omega$-automata and $\omega$-languages.
If $\Sigma$ is a finite alphabet, then $\Sigma^\omega$ is the set of all infinite
\emph{words} $w=w_0w_1\ldots$ with $w_i\in\Sigma$. For $w\in \Sigma^\omega$ we denote by
$w(i)$ the $i$-th symbol $w_i$. For convenience, we write $[n]$ for the set of natural
numbers $\{1,\ldots,n\}$.
A \emph{Büchi automaton} $\mc{A}$ is a tuple $(Q, \Sigma, \Delta, Q_0,
F)$, where $Q$ is a finite set of states, $\Sigma$ a finite alphabet, $\Delta \subseteq
Q\times \Sigma \times Q$ is the transition relation and $Q_0, F \subseteq Q$ are the sets
of initial and accepting states, respectively. When $Q$ is understood and $X\subseteq Q$, then
$\overline{X} := Q \setminus X$.
We write $\Delta(p,x) := \{q \mid (p,x,q) \in \Delta \}$ to denote the set of
\emph{successors} of $p$ on symbol $x$ and $\Delta(P,x)$ for $\bigcup_{p\in
P}\Delta(p,x)$.
A \emph{run} of an automaton on a word $w\in \Sigma^\omega$ is an infinite sequence of
states $q_0, q_1, \ldots$ starting in some $q_0 \in Q_0$ such that $(q_i, w(i), q_{i+1})
\in \Delta$ for all $i\geq 0$.
An automaton is \emph{deterministic} if $|Q_0|=1$ and $|\Delta(p,x)|\leq 1$ for all $p\in Q, x\in\Sigma$, and
\emph{non-deterministic} otherwise. In this work, we assume Büchi automata to be
non-deterministic and refer to them as NBA.
A \emph{transition-based deterministic parity automaton} (TDPA) is a deterministic
automaton $(Q,\Sigma,\Delta,Q_0,c)$ where instead of $F\subseteq Q$ there is a
\emph{priority function} $c : \Delta \to \mathbb{N}$ assigning a natural number to each
transition.

A run of an NBA is \emph{accepting} if it contains infinitely many accepting states.
A run of a TDPA is accepting if the smallest priority that appears infinitely often on transitions along the
run is even.
An automaton $\mc{A}$ \emph{accepts} $w\in \Sigma^\omega$ if there exists an accepting run on $w$, and
the language $L(\mc{A}) \subseteq \Sigma^\omega$ \emph{recognized} by $\mc{A}$ is the set
of all accepted words. To avoid confusion, we sometimes refer to states of TDPA that we
construct as \emph{macrostates} to distinguish them from the states of the underlying
Büchi automaton.

 \section{A Simplified Muller-Schupp Construction}
\label{sec:mullerschupp}

The essential idea for determinization using the Muller-Schupp
construction is the following: given some Büchi automaton $\mc{A}$ and
input word $w$, the resulting deterministic automaton conceptually
traverses a specific run-tree of $\mc{A}$ on $w$, called reduced
split-tree in \cite{kahler2008complementation}, and tracks enough
information to decide whether an infinite path with a specific shape
exists in this tree. Such a path is known to exist if and only if $w$
is accepted by $\mc{A}$. The construction presented in
\cite{kahler2008complementation} uses a structure called contraction
trees in order to track the relevant information. This has been
simplified in \cite{fogarty2015profile} to macrostates that consist of
an ordered tuple of disjoint sets of Büchi states, and two preorders
over the states appearing in the tuple.

In this section, we further simplify the structure of the macrostates for the
deterministic automaton to ordered tuples of disjoint sets of Büchi
states, and a single additional linear order on these sets (formally
expressed as a ranking function that assigns to each set a natural
number). This also results in a relatively simple transition function
on the macrostates.

The \emph{reduced split-tree} $t^{rs}(\mc{A},w)$ for NBA $\mc{A}$ and
word $w\in\Sigma^\omega$ is an ordered infinite tree in which the
nodes are labelled by state-sets, and each node has at most two
successors. Formally, it is constructed as follows. The first level of
the tree consists of the root node labelled by the initial states
$Q_0$. To construct level $i+1$ from level $i$, for each node at level
$i$ labelled by set $S$ of states, let the \emph{left child} of $S$ be
labelled by $\Delta(S, w(i))\cap F$ and the \emph{right child} by
$\Delta(S, w(i)) \cap \overline{F}$, i.e., accepting and non-accepting
successor states are separated.  Then keep only the leftmost (wrt. the
natural ordering of neighbors) occurrence of each state in the level
and finally remove nodes labelled by $\emptyset$.  Clearly, because of
the normalization, the number of nodes on each level can be at most
$|Q|$. An example of a reduced split-tree is shown in
Figure~\ref{fig:splittree}. We call an infinite path in the tree that
takes the left branch infinitely often a \emph{left-path}. Reduced
split-trees have the following useful property:

\begin{lemma}[\cite{kahler2008complementation}, Lemma 2]
  \label{lem:normisgood}
  $\mc{A}$ accepts $w \Longleftrightarrow t^{rs}(\mc{A}, w)$ has a left-path.
\end{lemma}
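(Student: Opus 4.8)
The plan is to prove both directions by relating runs of $\mc{A}$ on $w$ to paths in $t^{rs}(\mc{A},w)$, using the standard König-type argument for the nontrivial direction.

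First I would establish the bookkeeping lemma that makes the correspondence work: for every level $i$, the labels of the nodes at level $i$ form a partition of the set of states reachable from $Q_0$ by reading $w(0)\cdots w(i-1)$. This follows by induction on $i$ from the construction — the split into left and right children partitions $\Delta(S,w(i))$ according to membership in $F$, the leftmost-occurrence normalization removes duplicates so disjointness is maintained, and deleting $\emptyset$-labelled nodes does not change the union. A consequence I will use repeatedly: for any run $q_0,q_1,\ldots$ of $\mc{A}$ on $w$, there is a unique branch $S_0,S_1,\ldots$ of the tree with $q_i\in S_i$ for all $i$, and conversely every branch is "inhabited" by at least one such run (by picking, at each level, a state in the child label witnessing that it is nonempty, and using that nonemptiness of a child means some predecessor state in the parent has a $w(i)$-transition into it — then a König's lemma argument on the finitely-branching tree of partial runs along the branch yields an infinite run). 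I should also note that the branch associated to a run takes the \emph{left} branch at step $i$ exactly when $q_{i+1}\in F$.

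For the direction ``$\mc{A}$ accepts $w \Rightarrow$ left-path exists'': take an accepting run $q_0,q_1,\ldots$, so $q_i\in F$ for infinitely many $i$. The associated branch $S_0,S_1,\ldots$ then takes the left branch at infinitely many steps, hence is a left-path. For the converse ``left-path exists $\Rightarrow$ $\mc{A}$ accepts $w$'': let $S_0,S_1,\ldots$ be a left-path. I want an accepting run staying inside it. Build the tree $T$ of \emph{partial runs along the left-path}: a node at level $i$ is a finite run $q_0,\ldots,q_i$ with $q_j\in S_j$ for all $j\le i$, ordered by prefix. This tree is infinite (each $S_i$ is nonempty, and nonemptiness of $S_{i+1}$ as a child of $S_i$ guarantees that \emph{some} state of $S_{i+1}$ has a predecessor in $S_i$ — but I need to be slightly careful, because normalization may have moved a state to a node that is not the child of its ``natural'' parent). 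The clean way around this subtlety is to prove the inhabitation statement directly by induction: every node-label at level $i+1$ is nonempty, and by the partition lemma plus the construction, for each state $q$ in a level-$(i+1)$ label there is a state $p$ in \emph{some} level-$i$ label with $(p,w(i),q)\in\Delta$; restricting attention to the fixed left-path branch, one shows the set of level-$i$ labels along the branch that can reach the level-$(i+1)$ label along the branch is itself the relevant predecessor set — this is exactly where I expect the main friction. Once $T$ is infinite and finitely branching (branching degree $\le|Q|$), König's lemma gives an infinite path, i.e.\ an infinite run $q_0,q_1,\ldots$ with $q_i\in S_i$. Since the left-path takes the left branch at step $i$ iff its label $S_{i+1}$ was obtained by intersecting with $F$, every $q_{i+1}$ for such $i$ lies in $F$; as there are infinitely many such $i$, the run is accepting.

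The step I expect to be the main obstacle is making the inhabitation argument fully rigorous in the presence of the leftmost-occurrence normalization: a priori a state could appear at level $i+1$ having been ``relocated'' to a node whose parent is not the node it was produced from, so the naive claim ``$S_{i+1}$ is reachable in one step from $S_i$'' needs the partition lemma and a careful tracking of which occurrence survives normalization. I would handle this by phrasing the invariant as a statement about the \emph{set of all states at level $i$} (which is normalization-stable) together with the observation that along a fixed branch the surviving node is uniquely determined, so that ``the run stays on the branch'' is automatic once the run exists; the existence of the run then only needs the global one-step reachability of level $i+1$ from level $i$, which the partition lemma delivers cleanly. Since this is Lemma~2 of \cite{kahler2008complementation}, I would also simply cite that source for the detailed verification and present the above as the proof sketch.
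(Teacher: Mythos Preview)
The paper does not prove this lemma itself; it merely cites \cite{kahler2008complementation}. So there is no ``paper's own proof'' to compare against, and your final remark about simply citing that source is in fact all the paper does.

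That said, your sketch has a genuine gap in the forward direction. The bookkeeping claim ``for any run $q_0,q_1,\ldots$ there is a unique \emph{branch} $S_0,S_1,\ldots$ with $q_i\in S_i$'' is false. The nodes $S_i$ containing $q_i$ are unique, but the sequence need not be a branch: the leftmost-occurrence normalization can place $q_{i+1}$ in a child of some node strictly to the left of $S_i$ whenever $q_{i+1}$ is also a successor of a state in such a node. The example in Figure~\ref{fig:splittree} already exhibits this: along the run $q_0 q_1 q_2 q_1 \ldots$, the state $q_2$ at level~2 sits in the node $\{q_2\}$, but its successor $q_1$ at level~3 sits in the node $\{q_1\}$, which is the child of level~2's $\{q_1\}$, not of $\{q_2\}$. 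So ``the associated branch takes the left branch whenever $q_{i+1}\in F$'' does not go through. A correct argument for this direction needs more: one standard route is the contrapositive, using that the tree has width at most $|Q|$ so there are only finitely many infinite branches, each eventually right-only if no left-path exists; one then shows that the node containing $q_i$ can only drift leftward relative to these finitely many right-spines and must eventually get trapped in a finite subtree between two of them, contradicting infinitude of the run.

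Conversely, your worry about normalization in the $\Leftarrow$ direction is misplaced. Normalization only \emph{removes} states from a node; it never adds states that were produced elsewhere. Hence if $S_{i+1}$ is a child of $S_i$ in $t^{rs}$ and $q\in S_{i+1}$, then $q$ was already in $\Delta(S_i,w(i))\cap F$ (or $\cap\overline{F}$) before normalization, so some $p\in S_i$ has $(p,w(i),q)\in\Delta$. Your K\"onig argument for $\Leftarrow$ is therefore clean as stated; the friction you anticipated is on the other side.
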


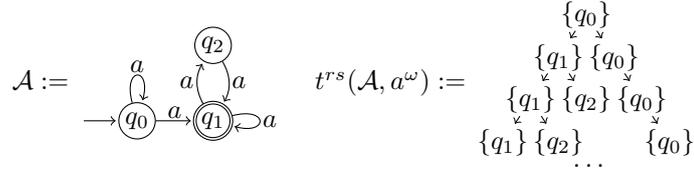
\begin{figure}[t]
  \begin{center}
        $\mc{A} :=$
    \begin{tikzpicture}[baseline={([yshift=-.5ex]current bounding box.center)},
      shorten >=1pt,node distance=1cm,inner sep=1pt,on grid,auto]
      \node[state,initial,initial text=]   (q0)               {$q_0$};
      \node[state,accepting] (q1) [right=of q0] {$q_1$};
      \node[state]           (q2) [above=of q1] {$q_2$};
        \path[->]
        (q0) edge [loop above] node {$a$} (q0)
        (q1) edge [loop right] node {$a$} (q1)
        (q0) edge node {$a$} (q1)
        (q1) edge [bend left] node {$a$} (q2)
        (q2) edge [bend left] node {$a$} (q1)
        ;
    \end{tikzpicture}
    \quad
    $t^{rs}(\mc{A}, a^\omega) :=$
    \begin{tikzpicture}[baseline={([yshift=-.5ex]current bounding box.center)},
      shorten >=1pt,node distance=8mm,inner sep=1pt,on grid,auto]
      \node[]   (l0s1)                 {$\{q_0\}$};
      \node[]   (l1s1) [below left=of l0s1, xshift=2mm] {$\{q_1\}$};
      \node[]   (l1s2) [below right=of l0s1, xshift=-2mm] {$\{q_0\}$};
      \node[]   (l2s1) [below left=of l1s1, xshift=2mm] {$\{q_1\}$};
      \node[]   (l2s2) [below right=of l1s1, xshift=-2mm] {$\{q_2\}$};
      \node[]   (l2s3) [below right=of l1s2, xshift=-2mm] {$\{q_0\}$};
      \node[]   (l3s1) [below left=of l2s1, xshift=2mm] {$\{q_1\}$};
      \node[]   (l3s2) [below right=of l2s1, xshift=-2mm] {$\{q_2\}$};
      \node[]   (l3s3) [below right=of l2s3, xshift=-2mm] {$\{q_0\}$};
      \node[]   (l4) [below=of l3s2,yshift=5mm,xshift=4mm] {$\ldots$};
        \path[->]
        (l0s1) edge [] node {} (l1s1)
        (l0s1) edge [] node {} (l1s2)
        (l1s1) edge [] node {} (l2s1)
        (l1s1) edge [] node {} (l2s2)
        (l1s2) edge [] node {} (l2s3)
        (l2s1) edge [] node {} (l3s1)
        (l2s1) edge [] node {} (l3s2)
        (l2s3) edge [] node {} (l3s3)
        ;
    \end{tikzpicture}
  \end{center}

  \caption{
            Example of a reduced split-tree $t^{rs}(\mc{A}, a^\omega)$ of an NBA $\mc{A}$.
            It has an infinite path representing the run $q_0^\omega$ and a
            left-path representing the run $q_0 q_1^\omega$, from which finite paths
            $q_0 q_1^* q_2$ branch off.
          }
  \label{fig:splittree}

\end{figure}

In the following, we identify nodes in the same level with their
label sets. To obtain a deterministic automaton,  we augment the nodes of the reduced
split-tree with number tokens that we call \emph{(age-)ranks}, which are used to infer a
left-path.

The new macrostates in the deterministic automaton represent levels of
reduced split-trees and consist of a tuple of disjoint non-empty sets $t := (S_1, S_2, \ldots, S_n)$
equipped with a bijection $\alpha : [n] \to [n]$ satisfying $\alpha(n) = 1$, which assigns to each set $S_i$
the rank $\alpha(i)$.
We call a pair $(\alpha,t)$ that satisfies these constraints \emph{ranked slice} and
we call it \emph{pre-slice}, if $t$ contains empty sets or $\alpha$ is not a bijection.
Notice that all macrostates are ranked slices, whereas pre-slices occur only during intermediate steps.
We introduce the following useful notations to work with ranked slices.
Let $|t|:=n$ and $Q_t := \bigcup_{i=1}^{|t|} S_{i}$.
The function $\op{idx} : Q_{t} \to [|t|]$ maps each state $q\in Q_{t}$ to the tuple index
$i$ such that $q \in S_{i}$, and by $\alpha(q)$ we denote $\alpha(\op{idx}(q))$ for $q \in
Q_t$.

When reading symbol $x \in \Sigma$ in macrostate $(\alpha, t)$, the
successor macrostate $(\alpha', t')$ is obtained by a sequence of
successive operations $\op{step}$, $\op{prune}$ and $\op{normalize}$,
where, roughly,

\begin{itemize}
\item $\op{step}$ interprets $t$ as nodes on a reduced
  split-tree level and calculates the next level sets,
\item $\op{prune}$ removes the empty sets produced by $\op{step}$,
  reassigning ranks in a specific way, and
\item $\op{normalize}$ just turns the ranking function obtained after
  $\op{prune}$ into a bijection again.
\end{itemize}

Below, we formally define these operations ($\op{step}$ and
$\op{prune}$ are illustrated in Figure~\ref{fig:mullerschupp}).

\begin{figure}
  \begin{center}
    \begin{tikzpicture}[baseline={([yshift=-.5ex]current bounding box.center)},
      shorten >=1pt,node distance=1cm,inner sep=1pt,on grid,auto]
      \node[]   (t)                 {$(\alpha, t)$};
      \node[]   (eq)  [right of=t]  {$= ($};

      \node[]   (s1) [right of=eq, xshift=4mm]    {${S_1}$};
      \node[]   (a1) [above right= 1.1mm and 4mm of s1] {$\scriptstyle \alpha(1)$};
      \node[]   (d1) [below of=s1, yshift=3mm]    {$\scriptstyle \Delta_t(S_1,x)$};
      \node[]   (s2) [right of=s1, xshift=15mm]   {${S_2}$};
      \node[]   (a2) [above right= 1.1mm and 4mm of s2] {$\scriptstyle \alpha(2)$};
      \node[]   (d2) [below of=s2, yshift=3mm]    {$\scriptstyle \Delta_t(S_2,x)$};
      \node[]   (s3) [right of=s2, xshift=17mm]   {${S_3}$};
      \node[]   (a3) [above right= 1.1mm and 4mm of s3] {$\scriptstyle \alpha(3)$};
      \node[]   (d3) [below of=s3, yshift=3mm]    {$\scriptstyle \Delta_t(S_3,x)$};
      \node[]   (s4) [right of=s3, xshift=9mm]   {$\ldots$};
      \node[]   (sn) [right of=s4, xshift=6mm]   {${S_n}$};
      \node[]   (an) [above right= 1.1mm and 4mm of sn] {$\scriptstyle \alpha(n)$};
      \node[]   (dn) [below of=sn, yshift=3mm]    {$\scriptstyle \Delta_t(S_n,x)$};

      \node[]   (cl)  [right of=sn, xshift=10mm]  {$)$};

      \node[]   (t2) [below of=t, yshift=-08mm]                {$(\hat{\alpha}, \hat{t})$};
      \node[]   (eq2)  [right of=t2]  {$= ($};

      \node[]   (s12) [right of=eq2,xshift=-3mm]               {${\hat{S}_1}$};
      \node[]   (a12) [above right= 1.1mm and 4mm of s12] {$\color{black!60}\scriptstyle n+1$};
      \node[]   (s22) [right of=s12, xshift=3mm]   {${\hat{S}_2}$};
      \node[]   (a22) [above right= 1.1mm and 4mm of s22] {$\scriptstyle \alpha(1)$};
      \node[]   (s32) [right of=s22, xshift=3mm]   {${\hat{S}_3}$};
      \node[]   (a32) [above right= 1.1mm and 4mm of s32] {$\color{black!60}\scriptstyle n+1$};
      \node[]   (s42) [right of=s32, xshift=3mm]   {${\hat{S}_4}$};
      \node[]   (a42) [above right= 1.1mm and 4mm of s42] {$\scriptstyle \alpha(2)$};
      \node[]   (s52) [right of=s42, xshift=3mm]   {${\hat{S}_5}$};
      \node[]   (a52) [above right= 1.1mm and 4mm of s52] {$\color{black!60}\scriptstyle n+1$};
      \node[]   (s62) [right of=s52, xshift=3mm]   {${\hat{S}_6}$};
      \node[]   (a62) [above right= 1.1mm and 4mm of s62] {$\scriptstyle \alpha(3)$};
      \node[]   (s72) [right of=s62, xshift=3mm]   {$\ldots$};
      \node[]   (s82) [right of=s72, xshift=3mm]   {${\hat{S}_{2n-1}}$};
      \node[]   (a82) [above right= 1.1mm and 4mm of s82] {$\color{black!60}\scriptstyle n+1$};
      \node[]   (sn2) [right of=s82, xshift=3mm]   {${\hat{S}_{2n}}$};
      \node[]   (an2) [above right= 1.1mm and 4mm of sn2] {$\scriptstyle \alpha(n)$};

      \node[]   (cl2)  [below of=cl, yshift=-08mm]  {$)$};

      \node[]   (t3) [below of=t2, yshift=-5mm]                {$(\tilde{\alpha}, \tilde{t})$};
      \node[]   (eq3)  [right of=t3]  {$= ($};

      \node[]   (s13) [below of=s22,yshift=-5mm]               {${\tilde{S}_1}$};
      \node[]   (a13) [above right= 1.1mm and 5mm of s13] {$\scriptstyle \tilde{\alpha}(1)$};
            \node[]   (m13) [above= 5mm of a13,xshift=8mm] {$\scriptstyle \min$};
      \node[]   (s23) [below of=s52, yshift=-5mm]   {${\tilde{S}_2}$};
      \node[]   (a23) [above right= 1.1mm and 5mm of s23] {$\scriptstyle \tilde{\alpha}(2)$};
            \node[]   (m23) [above= 5mm of a23,xshift=8mm] {$\scriptstyle \min$};
      \node[]   (sn3) [below of=s82, yshift=-5mm]   {${\tilde{S}_{\tilde{n}}}$};
      \node[]   (an3) [above right= 1.1mm and 5mm of sn3] {$\scriptstyle \tilde{\alpha}(\tilde{n})$};
            \node[]   (mn3) [above= 5mm of an3,xshift=5mm] {$\scriptstyle \min$};

      \node[]   (cl3)  [below of=cl2, yshift=-5mm]  {$)$};

      \path[->] (t) edge node [xshift=1mm] {$\op{step}$} (t2);
      \path[->] (t2) edge node [xshift=1mm] {$\op{prune}$} (t3);

      \path[->] (m13) edge (a13)
                (m23) edge (a23)
                (mn3) edge (an3)
                (a22) edge [bend right] (m13)
                (a32) edge [bend left] (m13)
                (a42) edge [bend left] (m13)
                (a52) edge [bend right] (m23)
                (a62) edge [bend left] (m23)
                (s72) edge [bend left] (m23)
                (a82) edge [bend right,out=0] (mn3)
                (an2) edge [bend left] (mn3)
                ;
      \path[dotted,->]
                (s22) edge [swap] node {$\scriptstyle\neq\emptyset$} (s13)
                (s52) edge [swap] node {$\scriptstyle\neq\emptyset$} (s23)
                (s82) edge [swap] node {$\scriptstyle\neq\emptyset$} (sn3)
                ;
      \path[->]
                (a1) edge [bend left] (a22)
                (a2) edge [bend left] (a42)
                (a3) edge [bend left] (a62)
                (an) edge [bend left] (an2)
                                                                ;
      \path[dotted,->]
                (s1) edge (d1)
                (s2) edge (d2)
                (s3) edge (d3)
                (sn) edge (dn)
                (d1) edge [swap] node {$\scriptstyle \cap F$} (s12)
                (d1) edge node {$\scriptstyle \cap \overline{F}$} (s22)
                (d2) edge [swap] node {$\scriptstyle \cap F$} (s32)
                (d2) edge node {$\scriptstyle \cap \overline{F}$} (s42)
                (d3) edge [swap] node {$\scriptstyle \cap F$} (s52)
                (d3) edge node {$\scriptstyle \cap \overline{F}$} (s62)
                (dn) edge [swap] node {$\scriptstyle \cap F$} (s82)
                (dn) edge node {$\scriptstyle \cap \overline{F}$} (sn2)
                ;

      \node[]   (c11)  [left=6mm of a22]  {};
      \node[]   (c21)  [above=0mm of a42]  {};
      \node[]   (c31)  [below=8mm of c21]  {};
      \node[]   (c41)  [left=2mm of s13]  {};

      \node[]   (c12)  [left=6mm of a52]  {};
      \node[]   (c22)  [above=1mm of s72]  {};
      \node[]   (c32)  [below=8mm of c22]  {};
      \node[]   (c42)  [left=2mm of s23]  {};

      \node[]   (c13)  [left=6mm of a82]  {};
      \node[]   (c23)  [above=0mm of an2]  {};
      \node[]   (c33)  [below=8mm of c23]  {};
      \node[]   (c43)  [left=2mm of sn3]  {};
      \begin{scope}[on background layer]
        \draw[fill=black!5,draw=white] \convexpath{c11,c21,c31,s13,c41}{3mm};
        \draw[fill=black!5,draw=white] \convexpath{c12,c22,c32,s23,c42}{3mm};
        \draw[fill=black!5,draw=white] \convexpath{c13,c23,c33,sn3,c43}{3mm};
      \end{scope}
    \end{tikzpicture}
  \end{center}
  \caption{
            Abstract illustration of $\op{step}$ and $\op{prune}$ in a Muller-Schupp
            transition on some $x\in \Sigma$.
            The superscripts represent the assigned ranks. First, $\op{step}$ calculates the
            normalized successors, separating accepting from non-accepting states and
            passing the parent rank on to the right child. In the illustration, we assume
            that the sets $\hat{S}_i \neq \emptyset$ for $i\in\{2,5,2n-1\}$, i.e., $x_1 =
            2, x_2 = 5, x_{\tilde{n}} = x_3 = 2n-1$. Then $\op{prune}$ keeps sets at these positions for the
            resulting tuple $\tilde{t}$, and $\tilde{\alpha}$ is obtained by taking the minimum of the
            ranks given by $\hat\alpha$ in the ranges spanning from one $x_i$ up to the position before
            $x_{i+1}$. Finally $t':=\tilde{t}$ and  $\tilde\alpha$ is normalized to
            $\alpha'$, while preserving strict ordering between positions wrt.\
            $\tilde\alpha$. The dotted edges connect parent sets (in the top row) and
            resulting left/right children sets (bottom row) in the conceptual reduced
            split-tree, the solid edges show the movements of the rank values assigned to
            the sets.
            }

  \label{fig:mullerschupp}
\end{figure}
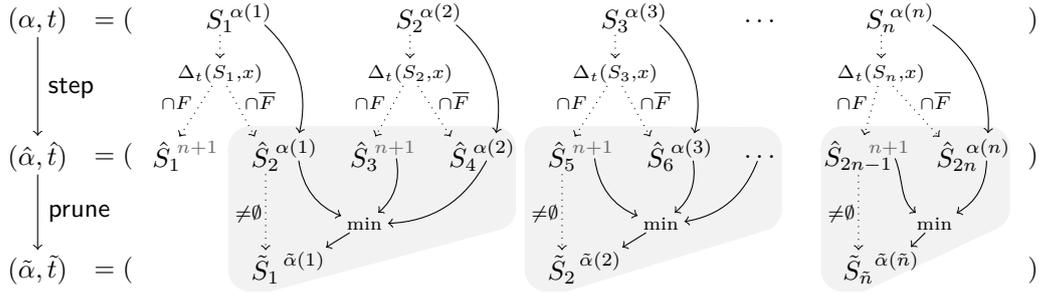

First we describe $\op{step}$, which constructs the next level of the reduced split-tree
and passes each existing rank on to the respective right child. Let \[\Delta_t(q,x) :=
\Delta(q,x)\setminus \Delta(\bigcup_{i=1}^{\op{idx}(q)-1} S_i,x),\] restricting for each
state $q\in Q_t$ the successors to only those which are not reached by some other state
located in a set to the left of $q$. Then, for each node $S_i$ let $\hat{S}_{2i-1} :=
\Delta_t(S_i, x)\cap F$ be the \emph{left child} and $\hat{S}_{2i} := \Delta_t(S_i,
x)\cap \overline{F}$ the \emph{right child}, containing the accepting and non-accepting
normalized successors, respectively. Let $\hat{\alpha}(2i) := \alpha(i)$ and
$\hat{\alpha}(2i-1) := n+1$, i.e., the right children inherit the rank of the parent and the
left children all get the same new maximal rank $n+1$, resulting in a pre-slice $(\hat{\alpha},\hat{t})$.

Intuitively, in the $\op{prune}$ operation, all ranks that mark empty sets after
$\op{step}$ are relocated onto the closest non-empty set to the left (or removed, if no
such set exists). When multiple ranks occupy the same set, then the smallest one is
preserved. Ranks that moved to the left in this way and are not removed, indicate a good
(green) event, whereas ranks which
were removed indicate a bad (red) event.

Formally, let $x_1 < x_2 < \ldots < x_{\tilde{n}}$ be the increasing sequence of all indices such that
$\hat{S}_{x_j} \neq \emptyset$.
Then $\op{prune}$ returns $(\tilde{\alpha}, \tilde{t})$ with the tuple $\tilde{t} := (\hat{S}_{x_1},
\ldots, \hat{S}_{x_{\tilde{n}}})$ without empty sets, where $\tilde{\alpha}$ is defined as
$\tilde{\alpha}(i) := \min\{ \hat{\alpha}(j) \mid x_i \leq j < x_{i+1} \}$
with $x_{\tilde{n}+1} := |\hat{t}|+1$.

The set of \emph{green} ranks is given by $G :=
\op{img}(\tilde\alpha)\cap\{\hat\alpha(j)\mid \hat{S}_j = \emptyset
\}$, where $\op{img}(\tilde\alpha)$ denotes the image of
$\tilde{\alpha}$. These are the ranks that
mark empty sets after $\op{step}$ and are not removed by $\op{prune}$.
The set of \emph{red} ranks given by $R := \op{img}(\hat\alpha) \setminus
\op{img}(\tilde\alpha)$ contains the ranks that were not preserved during $\op{prune}$.
The set of \emph{active ranks} is $A := G \cupdot R$.
Let $k := \min A$ (or $k := |Q|+1$ if $A=\emptyset$)
denote the \emph{dominating} rank of the transition, i.e., the smallest active
rank. We define the priority $p$ of the transition as $2k$ if $k\in G$ and $2k - 1$ otherwise.

The function $\tilde{\alpha}$ might assign the same rank to several
sets, and it might have gaps (unused rank values between
used ones). So finally, $\op{normalize}$ returns $(\alpha',t')$ with $t' := \tilde{t}$ and a final bijective ranking function
$\alpha' : [|t'|]\to[|t'|]$ such that $\tilde{\alpha}(j) < \tilde{\alpha}(k) \Rightarrow \alpha'(j) <
\alpha'(k)$ for all $j,k \in \{1,\ldots,|t'|\}$, i.e., a total order which is compatible
with the preorder induced by $\tilde{\alpha}$. If there are several
such ranking functions $\alpha'$, then any of these works.

A TDPA $\mc{B}$ is obtained by taking the initial state
$(\alpha_0, t_0)$ with $t_0 := (Q_0), \alpha_0(1) := 1$ and a transition function that
picks for each state a valid successor that satisfies the description above, and assigns
the corresponding priority $p$ to the edge.  Observe that by
construction, the sequence of states visited along some word $w\in \Sigma^\omega$ from the
initial state represents exactly the levels of $t^{rs}(\mc{A},w)$,
marked with ranks.

\begin{theorem} \label{thm:muller-schupp}
For a given NBA with $n$ states, the TDPA obtained by the Muller-Schupp construction accepts the same language as the NBA, and its number of states  is in $\mathcal{O}(n!^2)$.
\end{theorem}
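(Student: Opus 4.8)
The plan is to prove the two claims of the theorem---language equivalence and the bound $\mathcal{O}(n!^2)$ on the number of macrostates---separately.

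For language equivalence, fix $w\in\Sigma^\omega$ and let $(\alpha_0,t_0),(\alpha_1,t_1),\ldots$ be the run of $\mc{B}$ on $w$, with priority $p_i$ on the $i$-th transition. As noted just before the theorem, $t_i$ is precisely the $i$-th level of $t^{rs}(\mc{A},w)$ decorated with ranks, so by Lemma~\ref{lem:normisgood} it suffices to show that $\liminf_i p_i$ is even if and only if $t^{rs}(\mc{A},w)$ has a left-path. The technical core is to make rigorous the intuition that the rank of a set behaves like an age-stamp, that a green event on a rank $k$ witnesses \emph{progress} (states from a younger set being absorbed into an older one, which on the split-tree side corresponds to a left-branching step), and that a red event on $k$ witnesses a \emph{reset} (the set carrying $k$ losing all of its states). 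Concretely I would track each rank as a token that persists across $\op{normalize}$, establish that between two consecutive green events on $k$ the states in the set carrying $k$ have performed at least one left-step in $t^{rs}(\mc{A},w)$ while staying within the region governed by $k$, and establish that a red event on $k$ means every run that was confined to that region has just died.

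Given such invariants, both directions of the equivalence follow by a stabilization argument. Let $k^\ast$ be the smallest rank that is active infinitely often (if there is none, then $\liminf_i p_i=2(|Q|+1)-1$ is odd, consistent with the easily checked absence of a left-path). From some point on, $k^\ast$ is the smallest active rank whenever it is active, so $\liminf_i p_i\in\{2k^\ast-1,2k^\ast\}$, and it is even precisely when $k^\ast$ is red only finitely often. From a left-path one extracts such a rank: the path eventually stays within the region governed by some rank it meets infinitely often, and then it forces infinitely many green events on that rank while ruling out red events on it and on the smaller ranks that remain active; hence $\liminf_i p_i$ is even. Conversely, if $k^\ast$ is red only finitely often, its region is eventually stable and green events occur in it infinitely often; a König-type argument inside this region, combined with the progress recorded by the green events, yields an infinite run that takes infinitely many left-branches, i.e.\ a left-path. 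Pinning down these invariants and pushing the bookkeeping through the composition of $\op{step}$, $\op{prune}$ and $\op{normalize}$---in particular checking that the acceptance of $w$ does not depend on the freedom left in $\op{normalize}$---is the step I expect to be the main obstacle.

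For the size bound, first observe that every priority lies in $\{1,\ldots,2(|Q|+1)\}$, so $\mc{B}$ is a genuine parity automaton (with $\mathcal{O}(n)$ priorities), and it remains to count the reachable macrostates. Each is a ranked slice over the $n$ states of $\mc{A}$, i.e.\ an ordered tuple of disjoint non-empty subsets of $Q$ together with a ranking; the reachable ones have a nested rank structure, because ranks are only ever inherited by right children or created afresh as a new maximum and then merged leftwards. I would count these either directly from this nested structure, or by invoking the bijection with the Safra/history-tree data used in \cite{piterman2006nondeterministic,schewe2009tighter} (which is established in Section~\ref{sec:bijection}); in either case the number is $\mathcal{O}(n!^2)$ by the analysis of \cite{schewe2009tighter}.
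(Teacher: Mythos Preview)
Your plan is sound and would go through, but it takes a different route from the paper. The paper does not prove Theorem~\ref{thm:muller-schupp} directly at all: it derives correctness as a special case of the unified construction in Section~\ref{sec:construction} (Theorem~\ref{thm:correctness}), where $\op{merge}$ is the identity. The appendix proof of Theorem~\ref{thm:correctness} then splits into two lemmas. For $L(\mc{A})\subseteq L(\mc{B})$ the paper follows an accepting NBA run and tracks the \emph{rank-profile} of the current state (the sequence of ranks on the root-to-node path in the rank tree), showing these profiles can only improve under $\op{step}$, $\op{prune}$, $\op{merge}$, $\op{normalize}$ and hence eventually stabilise, after which the smallest infinitely active rank can only be green. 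For $L(\mc{B})\subseteq L(\mc{A})$ the paper introduces a run-DAG abstraction, isolates the region $\op{good}(i)$ to the left of the set carrying the stabilised rank $k$ up to the nearest ``bad left neighbour'', and applies K\"onig's lemma to checkpoints across which every surviving path has seen an accepting state. Your stabilisation-plus-K\"onig outline matches this shape closely; the genuine difference is that you anchor everything in Lemma~\ref{lem:normisgood} and the reduced split-tree, whereas the paper deliberately avoids that lemma so that the same proof covers arbitrary $\op{merge}$ (for which, as the paper remarks, the correspondence between macrostates and split-tree levels breaks down). Your approach is thus more direct and more transparent for this specific theorem, while the paper's buys uniformity across all instantiations of the construction.

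For the size bound you and the paper agree: both invoke the bijection of Section~\ref{sec:bijection} and then \cite[Proposition~2]{schewe2009tighter}. One small simplification: you do not need to argue about reachability or a ``nested rank structure''---the constraint $\alpha(n)=1$ in the definition of a ranked slice already guarantees that \emph{every} ranked slice corresponds to a ranked Safra tree under the bijection, so the raw count of ranked slices over $n$ states is already $\mathcal{O}(n!^2)$.
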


\vspace{5mm}
The correctness follows from the correctness of the generalized construction presented in
Section~\ref{sec:construction}.
The claim on the state complexity directly follows from
the upper bound given in \cite[Proposition~2]{schewe2009tighter}, and the bijection
between the set of ranked slices and the set of ranked Safra trees presented in
Section~\ref{sec:bijection}.
 \section{Sketch of the Safra Construction}
\label{sec:safra}

In this section, we  roughly illustrate the used structures and operations of the
Safra construction along the lines of \cite{piterman2006nondeterministic,schewe2009tighter}, so that we can demonstrate
its relationship with the Muller-Schupp construction in the next section. As before, $\mc{A}$ is an NBA with the usual components.

\begin{figure}[t]
  \begin{center}
    $\mc{A} :=$
    \begin{tikzpicture}[baseline={([yshift=-.5ex]current bounding box.center)},
      shorten >=1pt,node distance=1cm,inner sep=1pt,on grid,auto]
      \node[state,initial,initial text=]   (q0)               {$q_0$};
      \node[state,accepting]   (q2) [below=of q0] {$q_2$};
      \node[state] (q1) [right=of q0] {$q_1$};
      \node[state,accepting] (q3) [right=of q2] {$q_3$};
        \path[->]
        (q0) edge [loop right,looseness=5] node {$a$} (q0)
        (q2) edge [loop right,looseness=5] node {$a$} (q2)
        (q0) edge [] node {$a$} (q2)
        (q0) edge [] node {$a$} (q3)
        (q3) edge [swap] node {$a$} (q1)
        ;
    \end{tikzpicture}
    \quad\quad\quad
    \begin{tikzpicture}[baseline={([yshift=-.5ex]current bounding box.center)},
      shorten >=1pt,node distance=8mm,inner sep=1pt,on grid,auto]
      \node[]   (q0)               {$\{q_0{\color{gray},q_1,q_2}\}^1$};
      \node[]   (q1) [below left=of q0]              {$\{q_1\}^2$};
      \node[]   (q2) [below right=of q0]             {$\{q_2\}^3$};
        \path[->]
        (q0) edge [] node {} (q1)
        (q0) edge [] node {} (q2)
        ;
    \end{tikzpicture}
    $\overset{a}{\to}$
    \begin{tikzpicture}[baseline={([yshift=-.5ex]current bounding box.center)},
      shorten >=1pt,node distance=1cm,inner sep=1pt,on grid,auto]
      \node[]   (q0)               {$\{q_0\}^1$};
      \node[]   (q1) [below left=of q0]              {$\emptyset^2$};
      \node[]   (q2) [below=of q0,yshift=3mm]        {$\emptyset^3$};
      \node[]   (q3) [below right=of q0,xshift=5mm]             {$\{q_3\}$};
      \node[]   (q4) [below=of q1,yshift=4mm]              {$\emptyset$};
      \node[]   (q5) [below=of q2,yshift=4mm]              {$\{q_2\}^{\ }$};
        \path[->]
        (q0) edge [] node {} (q1)
        (q0) edge [] node {} (q2)
        (q0) edge [] node {} (q3)
        (q1) edge [] node {} (q4)
        (q2) edge [] node {} (q5)
        ;
    \end{tikzpicture}
    $\to$
    \begin{tikzpicture}[baseline={([yshift=-.5ex]current bounding box.center)},
      shorten >=1pt,node distance=8mm,inner sep=1pt,on grid,auto]
      \node[]   (q0)               {$\{q_0{\color{gray},q_2,q_3}\}^1$};
      \node[]   (q1) [below left=of q0]              {$\{q_2\}^2$};
      \node[]   (q2) [below right=of q0]             {$\{q_3\}^3$};
        \path[->]
        (q0) edge [] node {} (q1)
        (q0) edge [] node {} (q2)
        ;
    \end{tikzpicture}
  \end{center}

  \caption{ Example of a Safra-tree transition on letter $a$, based on NBA $\mc{A}$.
  The LIR position of nodes is depicted as superscript of the sets. The ``redundant''
  states that are implicit in our definition are depicted in gray in the initial and
  resulting tree. In the intermediate step, the tree is depicted after calculating and
  pruning successor state sets. In the final tree the remaining actions are performed and
  LIR positions are updated. The transition has a red event for LIR position 2 and a green event
  for position 3. Because of the removal of the node at position 2 in the LIR, the node that
  originally was at position $3$ moved up, whereas the fresh node labelled by $\{q_3\}$ comes last.
  }
  \label{fig:safra}
\end{figure}
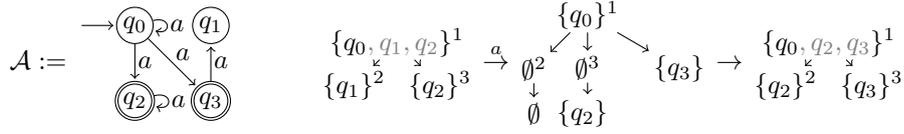

A \emph{Safra tree} is a finite ordered tree with non-empty state-sets as labels. Usually,
it is required that a parent is labelled by a strict superset of all states in its subtree and
siblings are labelled by pairwise disjoint sets. We  use the equivalent
requirement that \emph{all} labels in the tree are pairwise disjoint, i.e., refrain from
listing states in the parent label which are already present in some descendant. One can easily
reconstruct the ``full'' label set of a node wrt.\ the classical definition by taking the
union of all the labels in its subtree.
To obtain parity automata, each node of the Safra tree is associated
with a number from $\{1,\ldots, n\}$, where $n$ is the number of nodes
in the Safra tree \cite{piterman2006nondeterministic}. These numbers satisfy the property that
parent nodes have smaller numbers than their children, and a node has a smaller number than its right sibling.
The numbers
correspond to the ranks that we use in Section~\ref{sec:mullerschupp},
and we therefore refer to Safra trees in combination with these
numbers as \emph{ranked Safra trees}. Two ranked Safra trees are shown
in Figure~\ref{fig:safra} (and an intermediate tree in the middle).

In \cite{schewe2009tighter}, a slightly different representation is
used based on a \emph{later introduction record} (LIR), which just
lists the tree nodes in their introduction order, i.e., nodes appear
in this list after parents and older siblings (in this representation,
nodes have canonical names depending only on their position in the
tree). Safra trees with LIR directly correspond to ranked Safra trees
by annotating each tree node with its position in the LIR.

A transition on symbol $x\in \Sigma$ is constructed as follows (see Figure~\ref{fig:safra} for an example).
First, for each label set $S$, the set $S' := \Delta(S, x)$ of successor states is
calculated. After this, each node gets a fresh right-most child, and the accepting states in $S'$, that is $S' \cap
F$, are moved into the label of this child.
Then, disjointness is ensured by keeping of each state only the copy which is located at the
deepest node along the leftmost branch where that state occurs (this stage is represented by the middle tree in Figure~\ref{fig:safra}).
If now some internal node has an empty label, but a non-empty subtree (a good event for
the node), its subtree is collapsed into a single node by removing all descendants and
moving the states in their labels into the parent label.
Finally, all remaining sets that are labelled by $\emptyset$ are removed (being removed is
a bad event for a node). In the following, we refer to good and bad events
as \emph{green} and \emph{red}, respectively. The priority for the transition is derived from
the green and red
events, which are associated with the relative position of the corresponding nodes in the
LIR. The LIR for the new tree is obtained by deleting removed nodes from the LIR and
appending fresh nodes that remain in the resulting tree in arbitrary order.

 \section{From Safra-trees to ranked slices and back}
\label{sec:bijection}

In this section we state the key observation that was the starting
point of this work: there is a bijection between the set of ranked
slices and the set of ranked Safra trees. From a ranked Safra tree,
one obtains the ranked slice by simply listing the nodes of the Safra
tree by a depth-first post-order traversal (i.e., a parent processed after
all its children).  We formalize this relationship below, and then explain
that the transitions defined in the Muller-Schupp construction and in
the Safra construction are very similar, which then leads to the unified construction.

Let $(\alpha, t)$ be a ranked slice with $t=(S_1, \ldots, S_n)$.
The tuple index of the \emph{parent} of $S_{i}$ is the closest index to the right of $i$ that has
a smaller rank and is formally defined as
$\parent(i) := \min_{i < k \leq n}\{k \mid \alpha(k) < \alpha(i)\}$. As we require by definition of ranked slices that the right-most position in the tuple always has rank 1, this is the only position in the tuple for which the parent is undefined.
The ordered tree induced by $\parent$, with siblings in tuple index order,
is called the \emph{rank tree} of $(\alpha,t)$.
The tuple index of the \emph{left subtree boundary} of $S_{i}$ is the closest index to the left
with a smaller rank, and is denoted by
$\lsibling(i) := \max_{1\leq k < i}\{k \mid \alpha(k) < \alpha(i)\}$
or $0$ if no such index exists. It points to either the direct left sibling of $i$, or
the left sibling of the closest ancestor, if one exists. Effectively, $\lsibling(i)$ is
the closest neighbor to the left which is not a descendant of $i$. As children by definition are
always to the left of their parents, every node at indices
$\lsibling(i)+1,\ldots,i$ is in the subtree of $i$.

For an example, consider the tuple $(\{q_3\}^4, \{q_1\}^2, \{q_2\}^3, \{q_0\}^1)$, where the superscripts denote
the assigned rank (e.g., $\alpha(1) = 4$). The rightmost position $4$ of the tuple is the root of the tree.
For the positions 2 and 3, which have rank 2 and 3 respectively, the next position to the right with a smaller rank is in both cases position 4, i.e.,
$\parent(2) = \parent(3) = 4$. Finally, position 1 in the tuple has position 2 as parent, i.e., $\parent(1)=2$.
The discussed tuple is depicted with the parent edges at the bottom right of Figure~\ref{fig:saframullerschupp}.
There is also one non-trivial left subtree boundary in this tuple, assigned by
$\lsibling(3) = 2$, i.e. index 2 is not in the subtree of index 3, and in this case is an
actual left sibling of index 3.

We use the notation $\parent_\alpha := \alpha\circ\parent\circ\alpha^{-1}$ to denote the
parent rank of another rank directly, without mentioning the indices in the tuple. In the previous example, we have  $\parent_\alpha(4) = 2$, and $\parent_\alpha(2) = \parent_\alpha(3) = 1$.
We identify the age-ranks $\alpha(i)$ as nodes of the tree, while each set $S_{i}$
determines the label of the node $\alpha(i)$, called \emph{hosted set}.
We write $S_i^\downarrow := \bigcup_{k=\lsibling(i)+1}^{i} S_k$ for the \emph{subtree
set} of node $\alpha(i)$.
\begin{definition}
  \label{def:bijection}
  Let \textsf{safra2slice} be the mapping which takes a ranked Safra tree and returns
  $(\alpha, t)$, with $t:=(S_1, \ldots, S_n)$ being the label sets of the nodes
  in depth-first post-order (i.e., a parent processed after all its children) traversal order
  and ranking $\alpha$ defined by the ranks of the corresponding Safra tree nodes.

  Let \textsf{slice2safra} be the mapping which takes a ranked slice $(\alpha, t)$ and
  returns the ranked Safra tree given by the rank tree of $(\alpha, t)$, i.e. the tree
  structure defined by $\parent$ and the ordering of siblings given by the order of the
  corresponding sets in $t$.
\end{definition}

It is easy to see from the definitions that \textsf{safra2slice} and \textsf{slice2safra}
are injective and return a valid ranked slice and ranked Safra tree, respectively.
This implies that there exists a bijection between the sets of ranked Safra trees and ranked slices.
It is also not very hard to see that the following holds\withappendix{~(a proof can be found in Appendix~\ref{app:bijection})}:

\begin{restatable}{lemma}{lembijection}
  \label{lem:bijection}
  $\op{safra2slice}$ and $\op{slice2safra}$ are inverses of each other and provide a
  bijection between ranked Safra trees and ranked slices.
\end{restatable}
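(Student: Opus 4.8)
The plan is to show the two maps are mutually inverse by checking both composition directions, and to extract from this that each is a bijection between the finite sets in question. The core task is purely combinatorial bookkeeping about how a depth-first post-order traversal interacts with the $\parent$ relation, so I would organize the argument around the single invariant that the traversal order coincides with the sibling-order-respecting post-order of the rank tree.

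\medskip

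\textbf{Setup.} First I would fix notation: let $T$ be a ranked Safra tree with node set carrying distinct ranks from $[n]$ satisfying the two monotonicity constraints (parent-rank $<$ child-rank, left-sibling-rank $<$ right-sibling-rank), and let $(\alpha,t) = \op{safra2slice}(T)$. I would first verify the easy direction already asserted in the text — that $\op{safra2slice}(T)$ is a genuine ranked slice: the sets in a Safra tree are pairwise disjoint and nonempty by assumption, $\alpha$ is a bijection onto $[n]$ because ranks are distinct and exhaust $[n]$, and the root (rank $1$) is the last node visited in a post-order traversal, so it lands at tuple position $n$, giving $\alpha(n)=1$. Symmetrically, for a ranked slice $(\alpha,t)$ I would recall why $\op{slice2safra}$ produces a valid ranked Safra tree: the text already defines $\parent$ and notes the root position is the unique one with undefined parent; the monotonicity constraints on the resulting tree follow directly from the definitions of $\parent$ and $\lsibling$ (a parent has strictly smaller rank by definition of $\parent$; a right sibling has larger rank, since if position $i$ precedes position $j$ in the tuple with the same parent then $\lsibling(j) \ge i$ forces $\alpha(i) < \alpha(j)$ — otherwise $i$ would lie in the subtree of $j$).

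\medskip

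\textbf{The two compositions.} For $\op{slice2safra} \circ \op{safra2slice} = \op{id}$ on ranked Safra trees: starting from $T$, the slice lists the nodes of $T$ in post-order with their original ranks; I must show that the $\parent$ function computed from this list recovers exactly the parent relation of $T$, and that the tuple order restricted to siblings is their original left-to-right order. The latter is immediate: in a post-order traversal, siblings (and their subtrees) are visited left-to-right, so the relative order in $t$ of any set of siblings is preserved. For the former, the key claim is: in the post-order list, the parent of a node $v$ is the first node to the right of $v$ (in the list) whose rank is smaller than $\alpha(v)$. This holds because everything strictly between $v$ and its parent $p$ in the post-order list belongs to the subtree of $p$ but comes after $v$, hence is either in $v$'s own subtree (all of which has rank $> \alpha(v)$ by the subtree monotonicity obtained by transitivity of the parent-child rank inequality) or is a later sibling of $v$ or a descendant thereof (all of rank $> \alpha(v)$); and $p$ itself has rank $< \alpha(v)$. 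So $\parent$ as defined picks out exactly $p$. For the reverse composition $\op{safra2slice} \circ \op{slice2safra} = \op{id}$ on ranked slices: starting from $(\alpha,t)$, build the rank tree, then traverse it post-order; I must show the resulting tuple is again $t$ in the same order with the same ranking. The ranking is unchanged by construction (nodes keep their ranks). For the order, I would argue that the post-order traversal of the rank tree visits position $i$ before position $j$ exactly when $i < j$ in $t$ — equivalently, the subtree of node at position $i$ occupies a contiguous block $\{\lsibling(i)+1,\dots,i\}$ of tuple positions (already stated in the text), and post-order visits these blocks in the nested/left-to-right fashion dictated by the $\parent$ tree, which by the contiguity observation is precisely increasing tuple order. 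A short induction on tree size, or on $n$, formalizes this.

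\medskip

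\textbf{Conclusion and the main obstacle.} Having both compositions equal to the identity, the maps are mutually inverse bijections; since they are inverse bijections between the \emph{set} of ranked Safra trees and the \emph{set} of ranked slices, the stated bijection follows, and (as the text notes) injectivity of each alone together with finiteness would already suffice. The routine parts are the validity checks and the sibling-order preservation. I expect the main obstacle to be the careful justification of the ``parent = first later node of smaller rank'' claim, i.e., controlling exactly which nodes sit between $v$ and its parent in the post-order enumeration and verifying every one of them has rank exceeding $\alpha(v)$ — this is where the two monotonicity constraints on ranked Safra trees (and their transitive consequence that an entire subtree has ranks above its root) get used, and it is the one spot where an off-by-one or a missed case (e.g.\ forgetting later siblings of $v$) would break the argument. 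Everything else is a matter of unwinding definitions.
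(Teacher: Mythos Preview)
Your argument is correct, but it does more work than the paper's. The paper first isolates a small general fact: if $f:A\to B$ and $g:B\to A$ are both injective and $g\circ f=\op{id}_A$, then $f$ and $g$ are mutual inverses (they derive this by taking any left inverse $g^{li}$ of $g$ and computing $g^{li}=g^{li}\circ(g\circ f)\circ(g\circ f)=f$). With this lemma in hand, and relying on the injectivity of both maps already noted in the text, the paper only needs to verify the single direction $\op{slice2safra}\circ\op{safra2slice}=\op{id}$, which it does via exactly the ``parent = nearest later node of smaller rank in post-order'' observation you identified. Your approach of checking both compositions directly is perfectly fine and arguably more self-contained; the paper's shortcut just trades the second-direction bookkeeping for a three-line algebraic lemma. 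Your parenthetical about finiteness is not quite the mechanism the paper uses (and would need the extra remark that both maps preserve the number $n$ of nodes so that one may work within each finite slice), though it also works.

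One small slip to clean up: in your justification of the key claim, you list ``$v$'s own subtree'' among the nodes lying strictly between $v$ and its parent $p$ in the post-order list. In post-order the entire subtree of $v$ appears \emph{before} $v$, not after it; what actually sits between $v$ and $p$ are the right siblings $c_{i+1},\ldots,c_k$ of $v$ together with their full subtrees. Your conclusion is unaffected---every such node still has rank exceeding $\alpha(v)$ by the sibling monotonicity plus the transitive parent-child inequality---but the case analysis as written is misdescribed.
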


As we have established that both constructions, Muller-Schupp and Safra, operate on essentially the same structures,
from now on we talk about ranked slices and trees interchangeably.
Using this relationship, one can take the same tree/slice and apply both the successor
calculation of the Safra construction and of the Muller-Schupp construction to it. What one
first notices, is that the resulting tree/slice is very similar or equal in
many cases. This is owed to the fact that most operations in one construction have an
equivalent operation in the other, just formulated for the other representation.

For example, moving accepting successor states into a fresh child node in Safra's
construction corresponds to splitting accepting successors from non-accepting ones during
$\op{step}$ in the Muller-Schupp construction, as in the successor tuple the new child
(in the conceptual split-tree) gets a fresh, larger rank and by definition becomes the
rightmost child in the rank tree of the resulting new slice. The normalization steps that make the
successor sets pairwise disjoint also yield the same results. The ranks of nodes with green
events in the Safra construction coincide with ranks of sets that signal green in the
ranked slices, and ranks of Safra nodes with red events with ranks of sets that signal red.
The removal of empty sets by $\op{prune}$ and renumbering the ranks with $\op{normalize}$
is the same as the removal of the corresponding nodes in the Safra tree and updating
the LIR, i.e., the ranks of Safra nodes.

\begin{figure}
\begin{minipage}[t]{0.23\textwidth}
\begin{tikzpicture}[transform shape,baseline={([yshift=-.5ex]current bounding box.center)},shorten >=1pt,  inner sep=1pt,node distance=8mm,on grid,auto]
   \node[state,initial,initial text=] (q0)   {$q_0$};
   \node[state,accepting] (q1) [right=of q0] {$q_1$};
   \node[state,accepting] (q2) [below=of q0] {$q_2$};
   \node[state,accepting] (q3) [right=of q2] {$q_3$};
   \node[state] (q4) [right=of q1] {$q_4$};
   \node (lab) [left=of q0, yshift=12mm] {$\mc{A}:$};
    \path[->]
    (q0) edge [loop above] node {$a,b,c$} (q0)
    (q0) edge node  {$b$} (q1)
    (q0) edge node [swap] {$a$} (q2)
    (q2) edge [loop left] node {$b,c$} (q2)
            (q1) edge node {$c$} (q3)
    (q1) edge [bend left] node {$c$} (q4)
    (q4) edge [bend left] node {$b$} (q1)
    (q3) edge [loop right] node {$a,b$} (q3)
    (q4) edge [loop above] node {$a,c$} (q4)
    (q4) edge [bend left,near start] node {$c$} (q3);
\end{tikzpicture}
\end{minipage}
\begin{minipage}[t]{0.75\textwidth}
  \vspace{-1.3cm}
  ranked Safra tree sequence:\\[3mm]
\begin{tikzpicture}[transform shape,baseline={([yshift=-.5ex]current bounding box.center)},                    shorten >=1pt, inner sep=1pt,node distance=8mm,on grid,auto]
  \node (d0) [] {$\{q_0\}^1_\mcirc$};
  \node (d1) [below left=of d0] {$\{q_2\}^2_\triang$};
  \node (d2) [below right=of d0] {$\{q_4\}^3_\square$};
  \node (d3) [below=of d2,yshift=2mm] {$\{q_3\}^4_\hollowstar$};

  \node (de) [right=of d2] {$\overset{a}{\to}$};

  \node (e1) [right=of de] {$\{q_4\}^2_\square$};
  \node (e0) [above right=of e1] {$\{q_0\}^1_\mcirc$};
  \node (e2) [below right=of e0] {$\{q_2\}^4_\heartsuit$};
  \node (e3) [below=of e1,yshift=2mm] {$\{q_3\}^3_\hollowstar$};

  \node (ef) [right=of e2] {$\overset{c}{\to}$};

  \node (f1) [right=of ef] {$\{q_4\}^2_\square$};
  \node (f0) [above right=of f1] {$\{q_0\}^1_\mcirc$};
  \node (f2) [below right=of f0] {$\{q_2\}^3_\heartsuit$};
  \node (f3) [below=of f1,yshift=2mm] {$\{q_3\}^4_\diamondsuit$};

  \node (fg) [right=of f2] {$\overset{b}{\to}$};

  \node (g1) [right=of fg,xshift=1mm] {$\{q_1,q_3\}^2_\square$};
  \node (g0) [above right=of g1] {$\{q_0\}^1_\mcirc$};
  \node (g2) [below right=of g0,xshift=1mm] {$\{q_2\}^3_\heartsuit$};

  \path[]
  (d0) edge node {} (d1)
  (d0) edge node {} (d2)
  (d2) edge node {} (d3)
  (e0) edge node {} (e1)
  (e0) edge node {} (e2)
  (e1) edge node {} (e3)
  (f0) edge node {} (f1)
  (f0) edge node {} (f2)
  (f1) edge node {} (f3)
  (g0) edge node {} (g1)
  (g0) edge node {} (g2)
  ;
\end{tikzpicture}
\end{minipage}

\vspace{3mm}
\begin{minipage}[t]{0.49\textwidth}
\begin{tikzpicture}[transform shape,baseline={([yshift=-.5ex]current bounding box.center)},                    shorten >=1pt, inner sep=1pt,node distance=12mm,on grid,auto]
  \node (d0) [] {\( \{q_0\}^1_\mcirc \)};
  \node (d2) [left=of d0,xshift=-6mm] {\( \{q_4\}^3_\square \)};
  \node (d3) [left=of d2,xshift=2mm] {\( \{q_3\}^4_\hollowstar \)};
  \node (d1) [left=of d3,xshift=-0mm] {\( \{q_2\}^2_\triang \)};

  \node (txt) [above=6mm of d1,xshift=-3mm,anchor=west] {Muller-Schupp sequence of ranked slices:};

  \node (e0) [below right=of d0,xshift=-2mm,yshift=-0mm] {\( \{q_0\}^1_\mcirc \)};
  \node (e2) [below left=of d0,xshift=4mm,yshift=-0mm] {\( \{q_2\}^4_\heartsuit \)};
  \node (e1) [below right=of d2,xshift=-6mm,yshift=-0mm] {\( \{q_4\}^2_\square \)};
  \node (e3) [below left=of d3,xshift=2mm,yshift=-0mm] {\( \{q_3\}^3_\hollowstar \)};

  \node (f0) [below right=of e0,xshift=-6mm,yshift=-0mm] {\( \{q_0\}^1_\mcirc \)};
  \node (f2) [below left=of e2,xshift=6mm,yshift=-0mm] {\( \{q_2\}^3_\heartsuit \)};
  \node (f1) [below right=of e1,xshift=-14mm,yshift=-0mm] {\( \{q_4\}^2_\square \)};
  \node (f3) [below left=of e1,xshift=-10mm,yshift=-0mm] {\( \{q_3\}^4_\diamondsuit \)};

  \node (g0) [below right=of f0,xshift=-6mm,yshift=-0mm] {\( \{q_0\}^1_\mcirc \)};
  \node (g2) [below left=of f2,xshift=6mm,yshift=-0mm] {\( \{q_2\}^3_\heartsuit \)};
  \node (g1) [below right=of f1,xshift=-14mm,yshift=-0mm] {\( \{q_1\}^2_\square \)};
  \node (g3) [below left=of f1,xshift=-10mm,yshift=-0mm] {\( \{q_3\}^4_\diamondsuit \)};

  \node (de) [above right=of e0,xshift=-0mm,yshift=-3mm] {$\downarrow$ $a$};
  \node (ef) [above right=of f0,xshift=-0mm,yshift=-3mm] {$\downarrow$ $c$};
  \node (fg) [above right=of g0,xshift=-0mm,yshift=-3mm] {$\downarrow$ $b$};

  \path[->]
    (d2) edge[dotted] node {\tiny$\notin F$} (e1)
    (d3) edge [dotted,swap] node {\tiny$\in F$} (e3)
    (d0) edge [dotted,swap] node {\tiny$\in F$} (e2)
    (d0) edge[dotted] node {\tiny$\notin F$} (e0)
    (e0) edge[dotted] node {\tiny$\notin F$} (f0)
    (e2) edge [dotted,swap] node {\tiny$\in F$} (f2)
    (e1) edge [dotted,swap] node {\tiny$\in F$} (f3)
    (e1) edge[dotted] node [yshift=2mm,xshift=1mm] {\tiny$\notin F$} (f1)
    (f0) edge[dotted] node {\tiny$\notin F$} (g0)
    (f2) edge [dotted,swap] node {\tiny$\in F$} (g2)
    (f3) edge [dotted,swap] node {\tiny$\in F$} (g3)
    (f1) edge [dotted,swap] node {\tiny$\in F$} (g1)
    ;
\end{tikzpicture}
\end{minipage}
\hfill\vline\hfill
\begin{minipage}[t]{0.48\textwidth}
\begin{tikzpicture}[transform shape,baseline={([yshift=-.5ex]current bounding box.center)},                    shorten >=1pt, inner sep=1pt,node distance=8mm,on grid,auto]
  \node (q0) [] {\footnotesize $\{q_0\}^1$};
  \node (q1) [left=of q0] {\footnotesize $\emptyset^5$};
  \node (q2) [left=of q1,xshift=2mm] {\footnotesize $\emptyset^3$};
  \node (q3) [left=of q2] {\footnotesize $\{q_2\}^5$};
  \node (q4) [left=of q3] {\footnotesize $\mathbf{\emptyset^2}$};
  \node (q5) [left=of q4] {\footnotesize $\{q_1\}^5$};
  \node (q6) [left=of q5] {\footnotesize $\emptyset^4$};
  \node (q7) [left=of q6] {\footnotesize $\{q_3\}^5$};

  \node (txt) [above=13mm of q4] {During last transition, after $\op{step}$:};

  \node (ql) [left=of q7,xshift=3mm] {\footnotesize (};
  \node (qr) [right=of q0,xshift=-3mm] {\footnotesize )};

  \node (sl) [below=of ql,yshift=-10mm] {\footnotesize (};
  \node (s0) [right=of sl,xshift=-2mm] {\footnotesize $\{q_1,q_3\}^2$};
  \node (s1) [right=of s0,xshift=2mm] {\footnotesize $\{q_2\}^3$};
  \node (s2) [right=of s1,xshift=0mm] {\footnotesize $\{q_0\}^1$};
  \node (sr) [right=of s2,xshift=-4mm] {\footnotesize )};

  \node (mr) [yshift=-10mm,below=of qr] {\footnotesize )};
  \node (m3) [left=of mr,xshift=4mm] {\footnotesize $\{q_0\}^1$};
  \node (m2) [left=of m3] {\footnotesize $\{q_2\}^3$};
  \node (m1) [left=of m2] {\footnotesize $\{q_1\}^2$};
  \node (m0) [left=of m1] {\footnotesize $\{q_3\}^4$};
  \node (ml) [left=of m0,xshift=4mm] {\footnotesize (};

  \node (dul) [below=of q4,yshift=5mm,xshift=-3mm] {};
  \node (dur) [below=of q4,yshift=5mm,xshift=2mm] {};
  \node (ddl) [above=of s0,yshift=-4mm] {};
  \node (ddr) [above=of m1,yshift=-4mm] {};

    \path[->]
    (q0) edge [] node {} (q1)
    (q0) edge [bend right=40] node {} (q2)
    (q2) edge [] node {} (q3)
    (q0) edge [bend right=40] node {} (q4)
    (q4) edge [] node {} (q5)
    (q4) edge [bend right=40] node {} (q6)
    (q6) edge [] node {} (q7)

    (m3) edge [bend right=45] node {} (m2)
    (m3) edge [bend right=45] node {} (m1)
    (m1) edge [bend right=45] node {} (m0)

    (s2) edge [bend right=45] node {} (s1)
    (s2) edge [bend right=45] node {} (s0)
    ;
    \draw[-implies,double equal sign distance,swap] (dul) to node [yshift=1mm,xshift=-2mm] {\footnotesize Safra} (ddl);
    \draw[-implies,double equal sign distance] (dur) to node [yshift=0mm,xshift=2mm] {\footnotesize Muller-Schupp} (ddr);
\end{tikzpicture}
\end{minipage}

\caption{Transitions based on NBA $\mc{A}$ using both constructions.
The superscripts denote the ranks of tree nodes / sets in the slice tuple.
The subscripts are added for illustration purposes and
conceptually track nodes throughout time, i.e., the same symbol marks the ``same'' node at different times.
The algorithms agree on all but the last transition, where they differ due to
different handling of green nodes/ranks, in this case rank $2$ that marks an empty set
after calculating and splitting the successors (illustrated on the bottom right).
In the Muller-Schupp case, the rank is moved left during $\op{prune}$, resulting in a
child being pulled into the parent in the rank tree, whereas in the Safra construction the
whole subtree is collapsed. The solid edges between sets depict the rank tree induced by
$\parent$, dotted edges depict the edges in the conceptual split-tree. In the bottom
right the slices are shown together with their tree interpretation.
}
\vspace{-3mm}
\label{fig:saframullerschupp}
\end{figure}
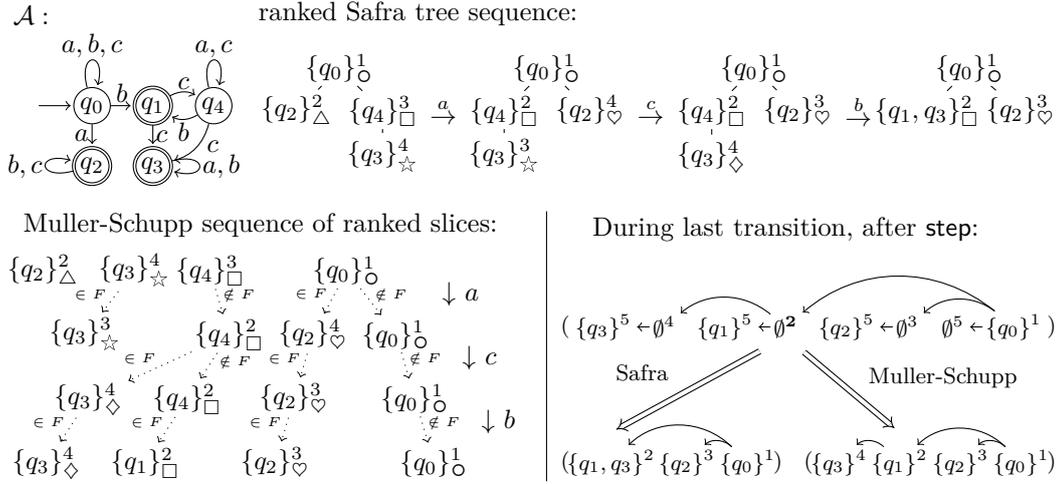

In fact, \emph{the only difference} between the constructions is what
happens with a tree node in case of a green event. Recall that in
Safra's construction, the whole subtree of a green node is collapsed
to a single node.  In the Muller-Schupp construction, the green ranks
are those that end up on an empty set after $\op{step}$, and that
survive the $\op{prune}$ operation, in which the ranks are moved to
the next non-empty set to the left, and only the minimal ones are kept
on each non-empty set. In the view of ranked trees, this corresponds
to a green node absorbing its rightmost, uppermost child node into it,
while keeping the rest of the subtree unchanged. See
Figure~\ref{fig:saframullerschupp} for an illustration.

After observing that both constructions differ in only a minor step and noticing that both
yield correct (but possibly different) automata, it becomes apparent that the exact step
performed for green events is not essential and there must be a more general
mechanism to uncover. The construction we present in Section~\ref{sec:construction} results from this
line of thought.

On the practical side, it is worth mentioning that the cost of switching between the
representations using the presented bijection is negligible---the traversal of a ranked
Safra tree to obtain a ranked slice is obviously possible in linear time.  For the other
direction there also exists a simple linear time algorithm\withappendix{~(presented in Appendix~\ref{app:lintime})}
that calculates the parent and left subtree boundary relation from the ranking $\alpha$.
 \section{The unified construction}
\label{sec:construction}

In this section, we present a construction that builds on the
Muller-Schupp construction from Section~\ref{sec:mullerschupp}, and
unifies it with Safra's construction by adding another operation,
called $\op{merge}$, between $\op{prune}$ and $\op{normalize}$:
$(\alpha,t) \xrightarrow{\op{step}}
(\hat{\alpha}, \hat{t})  \xrightarrow{\op{prune}}
(\tilde{\alpha}, \tilde{t}) \xrightarrow{\op{merge}}
(\check\alpha, \check{t}) \xrightarrow{\op{normalize}}
(\alpha',t')
$.
This new operation is nondeterministic, and can be instantiated in
different ways. In particular, it can be instantiated trivially and
thus corresponds to the Muller-Schupp construction, and it can be used
to emulate the Safra construction.

We first describe the idea of $\op{merge}$, and then give a formal
definition. Assume that, after $\op{step}$ and $\op{prune}$ have been
applied to some ranked slice $(\alpha, t)$, we have the pre-slice
$(\tilde{\alpha}, \tilde{t})$, and the dominating (minimal active) rank
$k$ (determined by $\op{prune}$, see Section~\ref{sec:mullerschupp}).  Then
$\op{merge}$ can collapse groups of neighbouring sets in the tuple,
and preserves the minimum rank from each collapsed range, similar to
$\op{prune}$. In contrast to $\op{prune}$, which ``merges'' one
non-empty set with multiple empty sets in a deterministic manner,
$\op{merge}$ may actually take the union of multiple adjacent
non-empty sets, depending on the ranks currently assigned to them.

The non-overlapping intervals of sets that are collapsed together are
not uniquely determined in general.
They only have to satisfy the constraints that no sets with rank
smaller than the dominating rank $k$ are merged with anything
else, and that the set with rank $k$ is not merged with anything to the
right of it. These constraints are important for the correctness, and
ensure that in the ranked Safra tree perspective, the nodes with rank
smaller than $k$ do not change, and that the node with the dominating
rank $k$ is not merged with sets outside of its subtree.

Formally, $\op{merge}$ returns a pre-slice $(\check\alpha, \check{t})$
obtained in the following way (see Figure~\ref{fig:generalmerge} for
an illustration).
Let $I_1, I_2, \ldots, I_{n'}$ be a sequence of sets partitioning the set of indices
$\{1,\ldots, \tilde{n}\}$ in $\tilde{t}$ into adjacent groups, i.e., $\min I_1 = 1$, $\max
I_{n'} = \tilde{n}$ and for all $j>1$ we have $\min I_j = \max I_{j-1} + 1$.
This grouping should satisfy the following property for all $1\leq j \leq n'$ and $l \in
I_j$: if $\tilde\alpha(l) < k$, then $|I_j| = 1$, and if $\tilde\alpha(l) = k$, then
$\max I_j = l$. Then the pre-slice $(\check\alpha, \check{t})$ is
defined by the sets
$\check{S}_i := \bigcup_{j\in I_i} \tilde{S}_j$ and the ranking
function $\check\alpha(i) :=
\min\{\tilde{\alpha}(j) \mid j \in I_i \}$ for all $i\in\{1,\ldots,n'\}$, i.e., for each interval, the union of the sets and
the smallest rank is taken.
\begin{figure}
  \begin{center}
    \begin{tikzpicture}[baseline={([yshift=-.5ex]current bounding box.center)},
      shorten >=1pt,node distance=1cm,inner sep=1pt,on grid,auto]
      \node   (t)                 {$(\tilde{\alpha}, \tilde{t})$};
      \node   (eq)  [right of=t]  {$= ($};

      \def\atlabs{{">k",">k","<k",">k",">k","k",">k","<k"}}
      \node   (s1) [right of=eq]    {${\tilde{S}_1}$};
      \node   (a1) [above right= 1.1mm and 4mm of s1]
                     {$\scriptstyle \pgfmathparse{\atlabs[1]}\pgfmathresult$};
      \foreach \x in {2,...,8}{
        \pgfmathtruncatemacro{\prev}{\x - 1}
        \node   (s\x) [right of=s\prev,xshift=4mm]   {${\tilde{S}_\x}$};
        \node   (a\x) [above right= 1.1mm and 4mm of s\x]
                        {$\scriptstyle \pgfmathparse{\atlabs[\prev]}\pgfmathresult$};
      }
      \node     (cl)  [right of=s8, xshift=3mm]  {$)$};

      \node   (t2)  [below of=t, yshift=-08mm]                {$(\check{\alpha}, \check{t})$};
      \node   (eq2) [right of=t2]  {$= ($};
      \node   (cl2)  [below of=cl, yshift=-08mm]  {$)$};

      \node (u1) [below of=s1,xshift=2mm] {$\bigcup$};
      \node (m1) [below of=s2,xshift=-2mm] {$\min$};

      \node (u3) [below of=s5,xshift=-4mm] {$\bigcup$};
      \node (m3) [below of=s5,xshift=6mm] {$\min$};

      \path (u1) -- (m1) node[midway,yshift=-10mm] (s12) {${\check{S}_1}$};
      \node   (s22) [below of=s3,yshift=-8mm]    {${\check{S}_2}$};
      \path (u3) -- (m3) node[midway,yshift=-10mm] (s32) {${\check{S}_3}$};
      \node   (s42) [below of=s7,yshift=-8mm]    {${\check{S}_4}$};
      \node   (s52) [below of=s8,yshift=-8mm]    {${\check{S}_5}$};
      \foreach \x in {1,...,5}{
        \node   (a\x2) [above right= 1.1mm and 5mm of s\x2] {$\scriptstyle \check\alpha(\x)$};
      }

      \path[->] (t) edge node [xshift=1mm] {$\op{merge}$} (t2);

      \foreach \x in {2,3,6,7}{
        \pgfmathtruncatemacro{\succ}{\x + 1}
        \draw[black!30,thick,dashed] ([yshift=2mm] $(s\x)!0.65!(s\succ)$)
                                  -- ([yshift=-2cm]$(s\x)!0.65!(s\succ)$);
      }

      \draw[decorate,decoration={brace}]                  ([yshift=3mm] $(eq)!0.5!(s1)$)
        -- node[above=1mm] {$\scriptstyle I_1=\{1,2\}$}   ([yshift=3mm] $(s2)!0.5!(s3)$);
      \draw[decorate,decoration={brace}]                  ([yshift=3mm] $(s2)!0.75!(s3)$)
        -- node[above=1mm] {$\scriptstyle I_2=\{3\}$}     ([yshift=3mm] $(s3)!0.5!(s4)$);
      \draw[decorate,decoration={brace}]                  ([yshift=3mm] $(s3)!0.75!(s4)$)
        -- node[above=1mm] {$\scriptstyle I_3=\{4,5,6\}$} ([yshift=3mm] $(s6)!0.5!(s7)$);
      \draw[decorate,decoration={brace}]                  ([yshift=3mm] $(s6)!0.75!(s7)$)
        -- node[above=1mm] {$\scriptstyle I_4=\{7\}$}     ([yshift=3mm] $(s7)!0.5!(s8)$);
      \draw[decorate,decoration={brace}]                  ([yshift=3mm] $(s7)!0.75!(s8)$)
        -- node[above=1mm] {$\scriptstyle I_5=\{8\}$}     ([yshift=3mm] $(s8)!0.5!(cl)$);

      \path[->,dotted]
        (s1) edge (u1)
        (s2) edge (u1)
        (u1) edge (s12)

        (s4) edge (u3)
        (s5) edge (u3)
        (s6) edge (u3)
        (u3) edge (s32)

        (s3) edge (s22)
        (s7) edge (s42)
        (s8) edge (s52)
        ;
      \path[->]
        (a1) edge (m1)
        (a2) edge (m1)
        (m1) edge (a12)

        (a4) edge (m3)
        (a5) edge (m3)
        (a6) edge[bend left] (m3)
        (m3) edge (a32)

        (a3) edge (a22)
        (a7) edge (a42)
        (a8) edge (a52)
        ;
    \end{tikzpicture}
  \end{center}
  \caption{
    Illustration of the general $\op{merge}$ operation that comes after $\op{prune}$ and before
    $\op{normalize}$, with the minimal active rank $k$ and ranks depicted as set
    superscripts. The illustrated intervals
        are the coarsest partitioning of indices in $\tilde{t}$ satisfying the constraints.
  }
  \label{fig:generalmerge}
\end{figure}
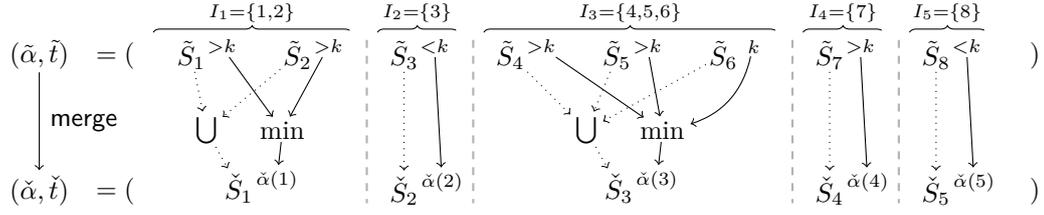

As in the Muller-Schupp construction, $\op{normalize}$ is applied to $(\check\alpha, \check{t})$ to obtain the
successor macrostate $(\alpha', t')$. This extended transition relation is used to
obtain the transition-based deterministic parity automaton, as before.

An example showing how the choice of different merge strategies leads to different successor
states is illustrated in Figure~\ref{fig:concreteex}.
Observe that we can recover the Muller-Schupp construction by using the identity function
for $\op{merge}$, or in other words, putting each index into its own interval, which is the finest partitioning of indices that satisfies
the requirements on $\op{merge}$.
On the other hand, we can also take the coarsest
compatible partitioning, i.e., minimize
the number of intervals. We call this kind of update \emph{maximal collapse}.

We can emulate a Safra-update by imposing some additional constraints
on the intervals, ensuring that only the complete subtrees of nodes
with green ranks are merged. More concretely, we require that
intervals that are not singletons span exactly the nodes of the
complete subtree that is rooted in a green rank in the view of the
slice as ranked Safra tree. Note that for an index $\ell$ in the
tuple, the subtree of the corresponding node in the ranked Safra tree
corresponds to the interval that starts one step right of the left
subtree boundary of $\ell$, and ends in $\ell$, that is, the interval
$\lsibling(l) +1, \ldots,\ell$.
Thus, for imitating the Safra merge rule, the intervals $I_1, I_2,
\ldots, I_{n'}$ from $\op{merge}$ are the unique smallest
intervals satisfying
\begin{gather*}
  \forall i\in [n'],l\in I_i:
    \quad \tilde\alpha(l) \in G \implies \lsibling(l) + 1 \in I_i
    \quad\text{(complete subtrees collapsed)}.
        \end{gather*}

\begin{proposition}
\label{prp:emulate-safra}
The operation $\op{merge}$ can be instantiated such that the
transitions of the constructed TDPA correspond to the transitions of
the Muller-Schupp construction or to the transitions of the Safra construction.
\end{proposition}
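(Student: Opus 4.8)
The statement has two parts, corresponding to the two instantiations of $\op{merge}$ already singled out in the text, so the plan is to treat them separately. For the Muller-Schupp direction the plan is to verify that the identity instantiation of $\op{merge}$ (putting every index into its own singleton interval) satisfies the constraints imposed on $\op{merge}$, and that with this choice the composite $\op{step}\circ\op{prune}\circ\op{merge}\circ\op{normalize}$ collapses to exactly $\op{step}\circ\op{prune}\circ\op{normalize}$, which is the transition function of Section~\ref{sec:mullerschupp}. The first point is immediate: singleton intervals trivially satisfy ``$|I_j|=1$ whenever $\tilde\alpha(l)<k$'' and ``$\max I_j = l$ whenever $\tilde\alpha(l)=k$''. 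The second point is equally immediate, since with singleton intervals $\check S_i = \tilde S_i$ and $\check\alpha = \tilde\alpha$, so $\op{merge}$ is the identity on pre-slices; the priority $p$ is unchanged because it was already fixed by $\op{prune}$. Hence the constructed TDPA is literally the Muller-Schupp TDPA.

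For the Safra direction the plan is to use the bijection $\op{safra2slice}/\op{slice2safra}$ from Lemma~\ref{lem:bijection} and the step-by-step correspondence of operations sketched in Section~\ref{sec:bijection}. First I would fix a macrostate, pass it through $\op{safra2slice}^{-1}$ to view it as a ranked Safra tree, and track the two computations in parallel: on the Safra side, compute successor sets, add the fresh rightmost children holding accepting states, make labels disjoint along leftmost branches, collapse subtrees of nodes with green events, and delete empty nodes; on the slice side, apply $\op{step}$, then $\op{prune}$, then the Safra-imitating $\op{merge}$ (intervals $\lsibling(l)+1,\dots,l$ around each green rank $l$), then $\op{normalize}$. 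The key claim is that after $\op{step}$ and $\op{prune}$ the slice is the post-order listing of the intermediate Safra tree obtained just before subtree collapse, with the same green and red ranks; this is exactly what the prose before Figure~\ref{fig:saframullerschupp} asserts, and I would cash it out by a straightforward induction using the definitions of $\Delta_t$, the left/right child split, and the $\parent$, $\lsibling$ relations. Then I would check that ``collapsing the subtree rooted at a green node'' on the Safra side corresponds exactly to ``taking the union over the interval $\lsibling(l)+1,\dots,l$ and keeping the minimum rank'' on the slice side: the interval contains precisely the tuple indices in that subtree (as established in Section~\ref{sec:bijection}), the union of their hosted sets is the collapsed label, and the minimum rank is the rank of the subtree root (since a node always has the smallest rank in its subtree), which is the rank the collapsed Safra node retains. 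The constraints on $\op{merge}$ are satisfied because nodes with rank $<k$ lie outside every collapsed subtree (collapse only happens at green ranks, and $k=\min A$ so no active rank, in particular no green rank, is $<k$), and the node with rank $k$, if green, is a subtree root so its interval ends at its own index. Finally $\op{normalize}$ matches the relabelling of the LIR, and the priority $p$ computed from $G$ and $R$ matches the Safra priority derived from the green/red events at the corresponding LIR positions, again using the bijection.

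The main obstacle is the bookkeeping in the inductive correspondence claim between the intermediate Safra tree and the pruned slice: one has to be careful that the disjointness normalization in Safra (keep the deepest copy along the leftmost branch) really matches the $\Delta_t$ restriction and $\op{prune}$'s minimum-rule (keep the leftmost copy / smallest rank), and that fresh children, empty-node deletion, and the resulting sibling reordering line up under the post-order traversal. Since the excerpt already states these equivalences informally and promises that the formal correctness of the generalized construction (which subsumes both instantiations) is proved in this section, I would keep this part at the level of a careful but routine verification, citing Lemma~\ref{lem:bijection} and the operation-by-operation discussion, rather than re-deriving everything from scratch. The mild technical subtlety to flag is that $\op{normalize}$ and the LIR update are both only determined up to the order of freshly introduced nodes/ranks, so the claimed ``correspondence'' is really a correspondence up to that shared freedom; making the two choices consistently (e.g. fixing the same tie-breaking order) yields literally equal macrostates, which is what is needed.
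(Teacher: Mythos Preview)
Your proposal is correct and follows essentially the same approach as the paper: the paper gives no separate formal proof of Proposition~\ref{prp:emulate-safra}, relying instead on the preceding discussion (Section~\ref{sec:bijection} for the operation-by-operation correspondence via the bijection of Lemma~\ref{lem:bijection}, and the text just before the proposition for the two specific instantiations of $\op{merge}$), and your plan is precisely to make that discussion rigorous, including the correct observation that the correspondence is up to the shared freedom in $\op{normalize}$/LIR tie-breaking.
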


\begin{figure}[!htbp]
    \begin{center}
    $\mc{A}:$
    \begin{tikzpicture}[baseline={([yshift=-.5ex]current bounding box.center)},
      shorten >=1pt,node distance=1cm,inner sep=1pt,on grid,auto]
      \node[state] (q0) {$q_0$};
      \node[state] (q1) [below right=of q0] {$q_1$};
      \node[state,accepting] (q5) [below left=of q0] {$q_5$};
      \node[state,accepting] (q3) [below=of q0,yshift=-1.3cm] {$q_3$};
      \node[state,accepting] (q2) [above right=of q3] {$q_2$};
      \node (q4)[state] [above left=of q3] {$q_4$};

      \path[->]
        (q0) edge[black!30] node {} (q1)
        (q0) edge[black!30] node {} (q5)
        (q0) edge[black!30,bend left] node {} (q4)
        (q4) edge[black!30] node {} (q5)
        (q4) edge[black!30] node {} (q3)
        (q3) edge[black!30,bend left] node {} (q1)
        (q3) edge[black!30] node {} (q2)
        (q2) edge[] node {} (q1)

        (q0) edge[loop left] node {} (q0)
        (q1) edge[black!30,loop right] node {} (q1)
                (q2) edge[loop right] node {} (q2)
        (q3) edge[loop right] node {} (q3)
        (q4) edge[loop left] node {} (q4)
        (q5) edge[loop left] node {} (q5)
        ;
    \end{tikzpicture}
      \hspace{1cm}
      $(\alpha, t) = $
    \begin{tikzpicture}[baseline={([yshift=-.5ex]current bounding box.center)},
      shorten >=1pt,node distance=1cm,inner sep=1pt,on grid,auto]
      \node (n1) {$\{q_0\}^1$};
      \node (n2) [below left =of n1] {$\{q_1\}^2$};
      \node (n3) [below right=of n1] {$\{q_4\}^4$};
      \node (n4) [below right=of n3] {$\{q_5\}^5$};
      \node (n5) [below left=of n2] {$\{q_2\}^3$};
      \node (n6) [below right=of n2] {$\{q_3\}^6$};

      \path (n1) edge (n2) (n1) edge (n3) (n3) edge (n4) (n2) edge (n5) (n2) edge (n6);
    \end{tikzpicture}

    \vspace{3mm}
    \begin{tabular}{c}
    \begin{tikzpicture}[baseline={([yshift=-.5ex]current bounding box.center)},
      shorten >=1pt,node distance=1cm,inner sep=1pt,on grid,auto]
      \node[]   (t)                 {$(\alpha, t)$};
      \node[]   (eq)  [right of=t]  {$= ($};

      \node[]   (s1) [right of=eq]    {$\{q_2\}^3$};
      \node[]   (s2) [right of=s1, xshift=2mm]   {$\{q_3\}^5$};
      \node[]   (s3) [right of=s2, xshift=2mm]   {$\{q_1\}^2$};
      \node[]   (s4) [right of=s3, xshift=2mm]   {$\{q_5\}^6$};
      \node[]   (s5) [right of=s4, xshift=2mm]   {$\{q_4\}^4$};
      \node[]   (s6) [right of=s5, xshift=2mm]   {$\{q_0\}^1$};

      \node[]   (cl)  [right of=s6,xshift=-2mm]  {$)$};

      \node[]   (t2) [below of=t, yshift=-05mm]                {$(\tilde{\alpha}, \tilde{t})$};
      \node[]   (eq2)  [right of=t2]  {$= ($};

      \node[]   (s12) [right of=eq2]               {$\{q_2\}^7$};
      \node[]   (s22) [right of=s12, xshift=2mm]   {$\{q_1\}^3$};
      \node[]   (s32) [right of=s22, xshift=2mm]   {$\{q_3\}^2$};
      \node[]   (s42) [right of=s32, xshift=2mm]   {$\{q_5\}^6$};
      \node[]   (s52) [right of=s42, xshift=2mm]   {$\{q_4\}^4$};
      \node[]   (s62) [right of=s52, xshift=2mm]   {$\{q_0\}^1$};

      \node[]   (cl2)  [below of=cl,yshift=-05mm]  {$)$};

      \path[->]
        (t) edge node[xshift=1mm,align=center] {$\op{step};$\\$\op{prune}$} (t2)
        (s1) edge node[near end] {$\scriptstyle \in F$} (s12)
        (s1) edge node[near end] {$\scriptstyle \in \overline{F}$} (s22)
        (s2) edge node[near end] {$\scriptstyle \in F$} (s32)
        (s4) edge node[] {$\scriptstyle \in F$} (s42)
        (s5) edge node[] {$\scriptstyle \in \overline{F}$} (s52)
        (s6) edge node[] {$\scriptstyle \in \overline{F}$} (s62)
        ;
    \end{tikzpicture}
      \\
    $\scriptstyle G = \{2,6\}\quad R = \{5\} \quad\implies\quad A = \{2,5,6\} \implies \mathbf{k = 2}$
    \end{tabular}

    \vspace{10mm}
    \begin{tabular}{lll}
      M.-S.: &
      \begin{tikzpicture}[baseline={([yshift=-.5ex]current bounding box.center)},
        shorten >=1pt,node distance=9mm and 1cm,inner sep=1pt,on grid,auto]
        \node[]   (t)                 {$(\tilde{\alpha}, \tilde{t})$};
        \node[]   (eq)  [right of=t]  {$= ($};

        \node[]   (s1) [right of=eq,xshift=-3mm]    {$\{q_2\}^7$};
        \node[]   (s2) [right of=s1, xshift=2mm]   {$\{q_3\}^3$};
        \node[]   (s3) [right of=s2, xshift=2mm]   {$\{q_1\}^2$};
        \node[]   (s4) [right of=s3, xshift=2mm]   {$\{q_5\}^6$};
        \node[]   (s5) [right of=s4, xshift=2mm]   {$\{q_4\}^4$};
        \node[]   (s6) [right of=s5, xshift=2mm]   {$\{q_0\}^1$};

        \node[]   (cl)  [right of=s6,xshift=-4mm]  {$)$};

        \node[]   (t2) [below of=t, yshift=-03mm]                {$({\alpha'}, {t'})$};
        \node[]   (eq2)  [right of=t2]  {$= ($};

        \node[]   (s12) [right of=eq2,xshift=-3mm]               {$\{q_2\}^6$};
        \node[]   (s22) [right of=s12, xshift=2mm]   {$\{q_1\}^3$};
        \node[]   (s32) [right of=s22, xshift=2mm]   {$\{q_3\}^2$};
        \node[]   (s42) [right of=s32, xshift=2mm]   {$\{q_5\}^5$};
        \node[]   (s52) [right of=s42, xshift=2mm]   {$\{q_4\}^4$};
        \node[]   (s62) [right of=s52, xshift=2mm]   {$\{q_0\}^1$};

        \node[]   (cl2)  [below of=cl,yshift=-03mm]  {$)$};

        \path[->]
          (t) edge node[xshift=1mm,align=center] {$\op{merge};$\\$\op{norm.}$} (t2)
          ;

        \foreach \x in {1,2,3,4,5}{
          \pgfmathtruncatemacro{\succ}{\x + 1}
          \draw[black!30,thick,dashed] ([yshift=2mm] $(s\x)!0.5!(s\succ)$)
                                    -- ([yshift=-1.7cm]$(s\x)!0.5!(s\succ)$);
        }
      \end{tikzpicture}
      &
      \begin{tikzpicture}[baseline={([yshift=-.5ex]current bounding box.center)},
        shorten >=1pt,node distance=1cm,inner sep=1pt,on grid,auto]
        \node (n1) {$\{q_0\}^1$};
        \node (n2) [below left =of n1,yshift=2mm] {$\{q_3\}^2$};
        \node (n3) [below right=of n1,yshift=2mm] {$\{q_4\}^4$};
        \node (n4) [below=of n3,yshift=4mm] {$\{q_5\}^5$};
        \node (n5) [below left=of n2,yshift=2mm,xshift=3mm] {$\{q_1\}^3$};
        \node (n6) [below left=of n5,yshift=2mm,xshift=3mm] {$\{q_2\}^6$};

        \path (n1) edge (n2) (n1) edge (n3) (n3) edge (n4) (n2) edge (n5) (n5) edge (n6);
      \end{tikzpicture}

      \\[15mm]
      Safra: &
      \begin{tikzpicture}[baseline={([yshift=-.5ex]current bounding box.center)},
        shorten >=1pt,node distance=9mm and 1cm,inner sep=1pt,on grid,auto]
        \node[]   (t)                 {$(\tilde{\alpha}, \tilde{t})$};
        \node[]   (eq)  [right of=t]  {$= ($};

        \node[]   (s1) [right of=eq,xshift=-3mm]    {$\{q_2\}^7$};
        \node[]   (s2) [right of=s1, xshift=2mm]   {$\{q_3\}^3$};
        \node[]   (s3) [right of=s2, xshift=2mm]   {$\{q_1\}^2$};
        \node[]   (s4) [right of=s3, xshift=2mm]   {$\{q_5\}^6$};
        \node[]   (s5) [right of=s4, xshift=2mm]   {$\{q_4\}^4$};
        \node[]   (s6) [right of=s5, xshift=2mm]   {$\{q_0\}^1$};

        \node[]   (cl)  [right of=s6,xshift=-4mm]  {$)$};

        \node[]   (t2) [below of=t, yshift=-03mm]                {$({\alpha'}, {t'})$};
        \node[]   (eq2)  [right of=t2]  {$= ($};

        \node[]   (s12) [right of=eq2,xshift=-3mm]               {$\{$};
        \node[]   (s22) [right of=s12, xshift=2mm]   {$q_1, q_2, q_3$};
        \node[]   (s32) [right of=s22, xshift=2mm]   {$\}^2$};
        \node[]   (s42) [right of=s32, xshift=2mm]   {$\{q_5\}^4$};
        \node[]   (s52) [right of=s42, xshift=2mm]   {$\{q_4\}^3$};
        \node[]   (s62) [right of=s52, xshift=2mm]   {$\{q_0\}^1$};

        \node[]   (cl2)  [below of=cl,yshift=-03mm]  {$)$};

        \path[->]
          (t) edge node[xshift=1mm,align=center] {$\op{merge};$\\$\op{norm.}$} (t2)
          ;

        \foreach \x in {3,4,5}{
          \pgfmathtruncatemacro{\succ}{\x + 1}
          \draw[black!30,thick,dashed] ([yshift=2mm] $(s\x)!0.5!(s\succ)$)
                                    -- ([yshift=-1.7cm]$(s\x)!0.5!(s\succ)$);
        }
      \end{tikzpicture}
      &
      \begin{tikzpicture}[baseline={([yshift=-.5ex]current bounding box.center)},
        shorten >=1pt,node distance=1cm,inner sep=1pt,on grid,auto]
        \node (n1) {$\{q_0\}^1$};
        \node (n2) [below left =of n1,yshift=2mm,xshift=-2mm] {$\{q_1,q_2,q_3\}^2$};
        \node (n3) [below right=of n1,yshift=2mm,xshift=2mm] {$\{q_4\}^3$};
        \node (n4) [below=of n3,yshift=4mm] {$\{q_5\}^4$};
        \path (n1) edge (n2) (n1) edge (n3) (n3) edge (n4);
      \end{tikzpicture}

      \\[15mm]
      Max.: &
      \begin{tikzpicture}[baseline={([yshift=-.5ex]current bounding box.center)},
        shorten >=1pt,node distance=9mm and 1cm,inner sep=1pt,on grid,auto]
        \node[]   (t)                 {$(\tilde{\alpha}, \tilde{t})$};
        \node[]   (eq)  [right of=t]  {$= ($};

        \node[]   (s1) [right of=eq,xshift=-3mm]    {$\{q_2\}^7$};
        \node[]   (s2) [right of=s1, xshift=2mm]   {$\{q_3\}^3$};
        \node[]   (s3) [right of=s2, xshift=2mm]   {$\{q_1\}^2$};
        \node[]   (s4) [right of=s3, xshift=2mm]   {$\{q_5\}^6$};
        \node[]   (s5) [right of=s4, xshift=2mm]   {$\{q_4\}^4$};
        \node[]   (s6) [right of=s5, xshift=2mm]   {$\{q_0\}^1$};

        \node[]   (cl)  [right of=s6,xshift=-4mm]  {$)$};

        \node[]   (t2) [below of=t, yshift=-03mm]                {$({\alpha'}, {t'})$};
        \node[]   (eq2)  [right of=t2]  {$= ($};

        \node[]   (s12) [right of=eq2,xshift=-3mm]               {$\{$};
        \node[]   (s22) [right of=s12, xshift=2mm]   {$q_1, q_2, q_3$};
        \node[]   (s32) [right of=s22, xshift=2mm]   {$\}^2$};
        \node[]   (s42) [right of=s32, xshift=2mm]   {$\{q_4,$};
        \node[]   (s52) [right of=s42, xshift=2mm]   {$q_5\}^3$};
        \node[]   (s62) [right of=s52, xshift=2mm]   {$\{q_0\}^1$};

        \node[]   (cl2)  [below of=cl,yshift=-03mm]  {$)$};

        \path[->]
          (t) edge node[xshift=1mm,align=center] {$\op{merge};$\\$\op{norm.}$} (t2)
          ;

        \foreach \x in {3,5}{
          \pgfmathtruncatemacro{\succ}{\x + 1}
          \draw[black!30,thick,dashed] ([yshift=2mm] $(s\x)!0.5!(s\succ)$)
                                    -- ([yshift=-1.7cm]$(s\x)!0.5!(s\succ)$);
        }
      \end{tikzpicture}
      &
      \begin{tikzpicture}[baseline={([yshift=-.5ex]current bounding box.center)},
        shorten >=1pt,node distance=1cm,inner sep=1pt,on grid,auto]
        \node (n1) {$\{q_0\}^1$};
        \node (n2) [below left =of n1,xshift=-1mm] {$\{q_1,q_2,q_3\}^2$};
        \node (n3) [below right=of n1,xshift=1mm] {$\{q_4,q_5\}^3$};
        \path (n1) edge (n2) (n1) edge (n3);
      \end{tikzpicture}

    \end{tabular}

    \vspace{5mm}
      \begin{tikzpicture}[baseline={([yshift=-.5ex]current bounding box.center)},
        shorten >=1pt,node distance=1cm,inner sep=1pt,on grid,auto]
        \node (n1) {$\{q_0\}^1$};
        \node (n2) [below left =of n1,yshift=2mm] {$\{q_3\}^2$};
        \node (n3) [below right=of n1,yshift=2mm,xshift=-2mm] {$\{q_4,q_5\}^4$};
                \node (n5) [below left=of n2,yshift=2mm,xshift=3mm] {$\{q_1\}^3$};
        \node (n6) [below left=of n5,yshift=2mm,xshift=3mm] {$\{q_2\}^5$};

        \path (n1) edge (n2) (n1) edge (n3) (n2) edge (n5) (n5) edge (n6);
      \end{tikzpicture}
      \begin{tikzpicture}[baseline={([yshift=-.5ex]current bounding box.center)},
        shorten >=1pt,node distance=1cm,inner sep=1pt,on grid,auto]
        \node (n1) {$\{q_0\}^1$};
        \node (n2) [below left =of n1,yshift=0mm] {$\{q_3\}^2$};
        \node (n3) [below right=of n1,yshift=0mm,xshift=-2mm] {$\{q_4,q_5\}^4$};
                \node (n5) [below=of n2,yshift=4mm] {$\{q_1,q_2\}^3$};

        \path (n1) edge (n2) (n1) edge (n3) (n2) edge (n5);
      \end{tikzpicture}
      \begin{tikzpicture}[baseline={([yshift=-.5ex]current bounding box.center)},
        shorten >=1pt,node distance=1cm,inner sep=1pt,on grid,auto]
        \node (n1) {$\{q_0\}^1$};
        \node (n2) [below left =of n1,yshift=0mm] {$\{q_1,q_3\}^2$};
        \node (n3) [below right=of n1,yshift=0mm] {$\{q_4,q_5\}^3$};
                \node (n5) [below=of n2,yshift=4mm] {$\{q_2\}^4$};

        \path (n1) edge (n2) (n1) edge (n3) (n2) edge (n5);
      \end{tikzpicture}
      \begin{tikzpicture}[baseline={([yshift=-.5ex]current bounding box.center)},
        shorten >=1pt,node distance=1cm,inner sep=1pt,on grid,auto]
        \node (n1) {$\{q_0\}^1$};
        \node (n2) [below left =of n1,yshift=0mm] {$\{q_3\}^2$};
        \node (n3) [below right=of n1,yshift=0mm,xshift=-2mm] {$\{q_4\}^4$};
        \node (n4) [below=of n3,yshift=4mm] {$\{q_5\}^5$};
        \node (n5) [below=of n2,yshift=4mm] {$\{q_1,q_2\}^3$};

        \path (n1) edge (n2) (n1) edge (n3) (n2) edge (n5) (n3) edge (n4);
      \end{tikzpicture}
      \begin{tikzpicture}[baseline={([yshift=-.5ex]current bounding box.center)},
        shorten >=1pt,node distance=1cm,inner sep=1pt,on grid,auto]
        \node (n1) {$\{q_0\}^1$};
        \node (n2) [below left =of n1,yshift=0mm] {$\{q_1,q_3\}^2$};
        \node (n3) [below right=of n1,yshift=0mm] {$\{q_4\}^3$};
        \node (n4) [below=of n3,yshift=4mm] {$\{q_5\}^4$};
        \node (n5) [below=of n2,yshift=4mm] {$\{q_2\}^5$};

        \path (n1) edge (n2) (n1) edge (n3) (n2) edge (n5) (n3) edge (n4);
      \end{tikzpicture}

  \end{center}

  \caption{$\mc{A}$ illustrates the relevant part of an NBA during a transition on some
    symbol $x\in \Sigma$, that is, the arrows correspond to the $x$-transitions of $\mc{A}$.
    The gray edges are the ones pruned in the reduced transition relation $\Delta_t$.
   The current macrostate $(\alpha,t)$ is represented as the rank tree to the right of $\mc{A}$, and as ranked slice below $\mc{A}$.
  The $\op{step}$ and $\op{prune}$ operations
  (see Fig.~\ref{fig:mullerschupp} for details) result in ranks 1,3 and
  4 being passed down along the right child.   Ranks 2 and 6 are moved to the left    and hence are green. Rank 5 is overwritten by 2 and hence is red. Rank 7 is a fresh
  rank which is larger than the others. The dominating rank $k$ is 2.
  The choice of different merge intervals (as shown in Fig.~\ref{fig:generalmerge})
  results in different successors.
  The successors for the three discussed variants, Muller-Schupp, Safra, and maximal collapse, are shown as rank trees on the right.
  The 5 other permitted successors are depicted at the bottom.
    }
  \label{fig:concreteex}
\end{figure}
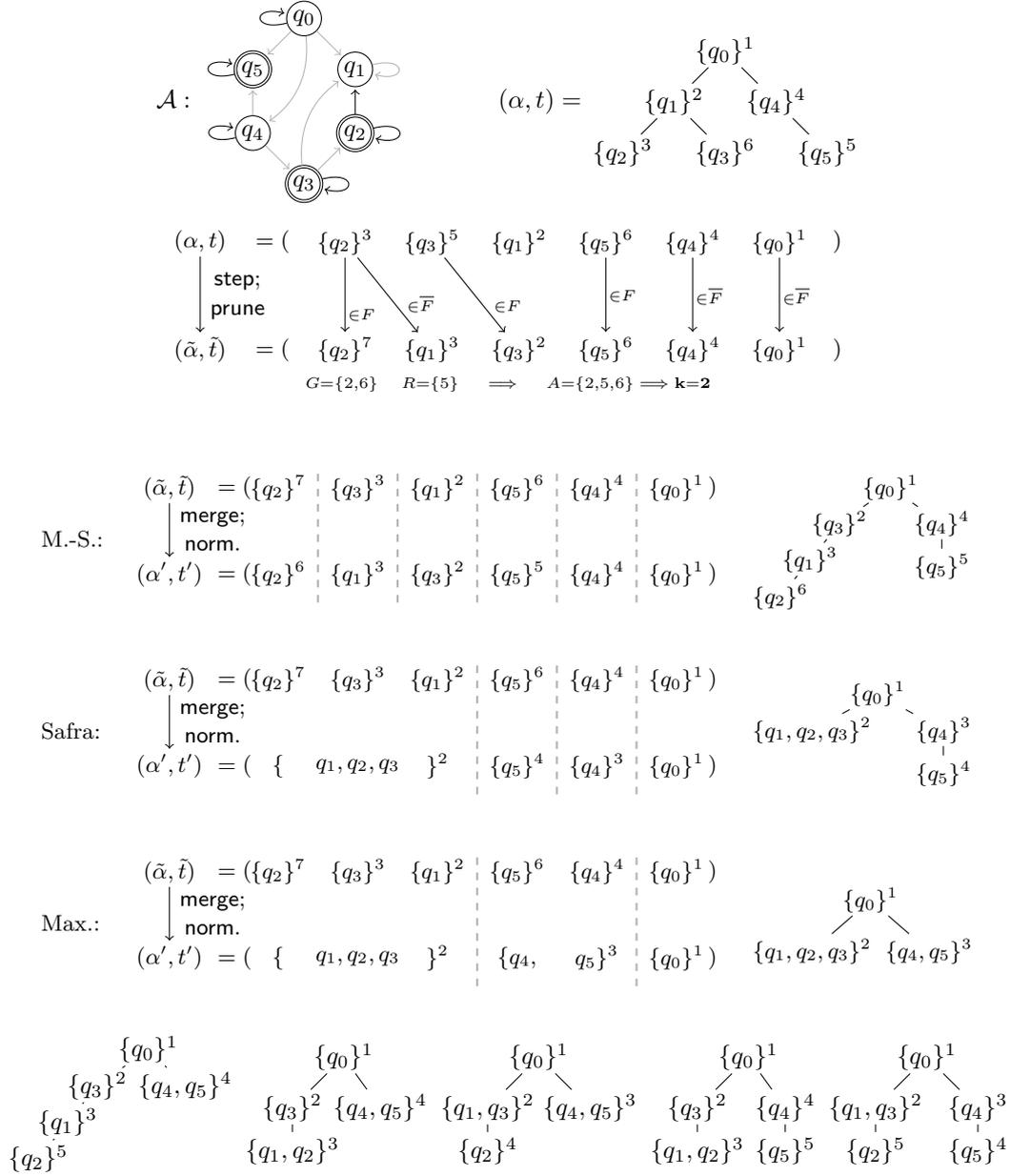

Notice that for all merge rules except for the Muller-Schupp update, the relationship of
ranked slices and consecutive levels of the reduced split-tree (see Section~\ref{sec:mullerschupp}) breaks
down. One can, however, reflect the merges also in the reduced
split-tree by doing the merges of the corresponding sets on each
level, which leads to an acyclic graph instead of a tree. This view is helpful in the correctness proof of the construction.

\begin{restatable}{theorem}{thmcorrectness}
  \label{thm:correctness}
  Let $\mc{A}$ be an NBA. Then a deterministic parity automaton $\mc{B}$,
  obtained by the described determinization construction, has at most $\mc{O}(n!^2)$
  states and recognizes the same language as $\mc{A}$.
\end{restatable}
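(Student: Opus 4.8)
The state-complexity bound follows immediately from what is already established: every macrostate is a ranked slice, and by Lemma~\ref{lem:bijection} the ranked slices are in bijection with ranked Safra trees, whose number is bounded by $\mathcal{O}(n!^2)$ via \cite[Proposition~2]{schewe2009tighter}. The substantive part is the language equivalence $L(\mc{B}) = L(\mc{A})$, and since all merge instantiations only affect the $\op{merge}$ step (with the Muller--Schupp variant recovered trivially), it suffices to prove correctness once for the general construction. The plan is to set up the correspondence between runs of $\mc{B}$ and the reduced split-tree $t^{rs}(\mc{A},w)$, then characterize acceptance in $\mc{B}$ in terms of the rank dynamics, and finally match that characterization with the existence of a left-path via Lemma~\ref{lem:normisgood}.

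First I would make precise the combinatorial object that a run of $\mc{B}$ on a word $w$ traces out. For the pure Muller--Schupp update this is exactly the sequence of levels of $t^{rs}(\mc{A},w)$ decorated with ranks; for general $\op{merge}$ it is the ``merged split-graph'' alluded to after Proposition~\ref{prp:emulate-safra} — an acyclic graph obtained from $t^{rs}(\mc{A},w)$ by collapsing, on each level, the sets inside each merge interval. I would record the invariant that the hosted set at rank $r$ on level $i$ is always a union of split-tree nodes, and that smaller ranks sit at ancestors in the rank tree. The key monotonicity fact to isolate is \emph{rank stability}: a rank value, once assigned to a node, can only move to an ancestor position (via $\op{prune}$ or $\op{merge}$) or be removed; $\op{normalize}$ only renames while preserving the order. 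Consequently, if a rank $r$ is never red from some point on, the corresponding node position stabilizes and the sequence of hosted sets forms a genuine infinite path-like structure in the split-graph.

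Next I would connect the parity condition to left-paths. On the ``soundness'' side ($L(\mc{B}) \subseteq L(\mc{A})$): suppose the minimal priority seen infinitely often is even, say $2k$, so rank $k$ is green infinitely often and red only finitely often (using the priority convention $2k$ for green, $2k-1$ for red, and minimality over active ranks). After the last red event at rank $k$, the node hosting rank $k$ never disappears and is never disturbed by a smaller rank, so by König's lemma applied to the (finitely branching) subtree hanging below that node in the split-graph, one extracts an infinite path; the infinitely many green events at $k$ mean this node keeps absorbing material from below, which — unwinding the definition of green (a rank that ends on an empty set after $\op{step}$ and survives $\op{prune}$/$\op{merge}$) — forces this path to take the left (accepting) branch infinitely often, hence a left-path, hence $w \in L(\mc{A})$ by Lemma~\ref{lem:normisgood}. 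On the ``completeness'' side ($L(\mc{A}) \subseteq L(\mc{B})$): given a left-path in $t^{rs}(\mc{A},w)$, its states are tracked at each level by some set in the slice; I would argue that the sequence of ranks hosting these states is non-increasing from some point on (it can only decrease when a merge/prune pulls the hosting node upward, and it never increases once the path has "settled"), hence stabilizes to some value $k^*$; the left-branches taken infinitely often by the path then translate into infinitely many green events at rank $k^*$, while no rank $\le k^*$ can be red infinitely often (that would contradict stabilization), so the minimal priority seen infinitely often is $2k^*$, even, and $w \in L(\mc{B})$.

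The main obstacle, I expect, is the completeness direction — specifically, showing that the rank of the hosting node of a fixed left-path stabilizes, and that this happens \emph{for every admissible merge instantiation uniformly}. The delicate point is that $\op{merge}$ is nondeterministic and may aggressively collapse nodes, so one must carefully use the two structural constraints imposed on merge intervals (no set of rank $< k$ is merged, and the rank-$k$ set is not merged rightward) to guarantee that a left-path cannot have its hosting rank pushed up and down unboundedly without generating a red event at some rank $\le$ the limit; ruling this out requires a well-foundedness argument on the finite set of rank values, combined with the observation that between two consecutive decreases of the hosting rank there must be at least one green event at the lower rank. I would formalize this via an induction on the (finite) range of ranks, peeling off the minimal rank that is green-infinitely-often and never-red-eventually, mirroring the standard progress-measure argument for Safra-style constructions but carried out on the rank-slice data structure so that it covers all merge strategies at once.
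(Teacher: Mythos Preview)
Your plan has a genuine gap in the completeness direction ($L(\mc{A}) \subseteq L(\mc{B})$). The claim that ``the sequence of ranks hosting these states is non-increasing from some point on'' is false as stated: every time the tracked run hits an accepting state, the $\op{step}$ operation moves that state into a freshly created \emph{left} child of its current node, and this child receives a brand-new rank strictly larger than all existing ones. On a left-path this happens infinitely often, so the hosting rank $\alpha(q_i)$ oscillates unboundedly and never stabilizes. What \emph{does} stabilize is a more refined invariant: the paper tracks the full \emph{rank profile} of $q_i$ (the sequence of ranks on the root-to-node path in the rank tree) and shows that its $k$-cut is monotone with respect to a carefully chosen order~$\preceq_k$ in which growing a profile by appending a child counts as an improvement, not a regression. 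Only with this invariant can one conclude that some fixed rank $k$ is eventually always on the profile and becomes green whenever the run visits an accepting state. Your well-foundedness sketch on ``the finite set of rank values'' cannot substitute for this, because the quantity you propose to descend on simply does not descend.

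There is a second, smaller mismatch in the soundness direction. You want to extract an infinite path in the merged split-graph and then invoke Lemma~\ref{lem:normisgood}, but that lemma speaks about the reduced split-tree $t^{rs}(\mc{A},w)$, and once $\op{merge}$ is non-trivial the macrostates no longer correspond to split-tree levels (the paper explicitly notes this after Proposition~\ref{prp:emulate-safra}). The paper therefore bypasses Lemma~\ref{lem:normisgood} entirely here: it defines a \emph{run-DAG} with marked/unmarked edges, isolates an interval $\op{good}(i)$ bounded on the right by the rank-$k$ set and on the left by the nearest set lying on a ``bad'' (never-marked) path, shows that paths cannot enter this interval from outside and that every path inside must eventually cross a marked edge, and then applies K\"onig's lemma to a checkpoint DAG to build an accepting run of $\mc{A}$ directly. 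Your K\"onig argument on ``the subtree below the rank-$k$ node'' is in the right spirit but underspecified: in the DAG setting one needs the left boundary (the bad-left-neighbor) just as much as the right one, since merges can pull material from strictly left of the rank-$k$ node into its successor.
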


The upper bound holds because the same macrostates are used as in the presented
Muller-Schupp construction in Section~\ref{sec:mullerschupp}. The correctness can be shown by a refinement of
the original correctness proof of the Safra construction \cite{safra1988complexity}\withappendix{, and is given in  Appendix~\ref{app:correctness}}.
 \section{Discussion and Conclusion}
\label{sec:discussion}

We have presented a new variant of the Muller-Schupp construction for
determinization of Büchi automata into parity automata, reducing the
information stored in the macrostates to ordered tuples of disjoint
sets annotated with ranks. These ranked slices are in bijection with
ranked Safra trees, which leads to a general construction that can
emulate the Muller-Schupp construction and the
Safra-construction. This answers, in some sense, the question from
\cite{fogarty2015profile} on the relation between the two types of
constructions.

In general, one can obtain many different valid deterministic automata by
choosing different deterministic transition functions that are compatible with the
described successor relation. One can also imagine this as constructing a non-deterministic
automaton with \emph{all} permitted successors, and then pruning the edges arbitrarily,
while preserving for each state only one outgoing transition for each symbol, to ``carve''
out a valid deterministic automaton.

This non-determinism comes from two sources.  One degree of freedom in
our construction comes from the different ways of assigning ranks (to
new nodes, and when closing gaps resulting from deleted ranks). This
freedom is already mentioned in \cite{schewe2009tighter}. But here the
flexibility is just in the choice of the specific permutation, which
still describes structurally the same tree in any case. The novel and
in our opinion powerful degree of freedom in our construction is the
possibility for different valid $\op{merge}$ operations, which allows
for a vastly larger pool of possible successors, as the results may
describe structurally different trees. Furthermore, the smaller the
smallest active rank, the more different a permitted successor may
look like.

We have explicitly mentioned the merge strategies that lead to the
Muller-Schupp and Safra constructions, and also have mentioned a third
strategy, the maximal collapse rule that merges as many sets as
possible
(as shown in e.g. Figure~\ref{fig:concreteex}).
We also want to point out that, while
fixing one such $\op{merge}$-rule for the whole construction is the
simplest implementation, the construction permits using \emph{any}
valid successor without the need to disambiguate the $\op{merge}$
operation beforehand, i.e., picking the successor of a state from the
set of permitted ones is a \emph{local} choice. One may think of
schemes where the successor is chosen dynamically, depending on the
input or already computed information. For example, one can check
whether a valid successor has already been constructed, and only add a
new state according to a fixed policy if this is not the case. We have
already implemented a prototype making use of such an optimization
(among others) with encouraging results.

We also want to point out that the presented construction works equally well with
transition-based Büchi automata as input, in which case the $\op{step}$ operation
separates states which are reached by at least one accepting transition from those that are
not. One can easily verify that this does not impact the reasoning in the proofs.

It is also possible to adapt the construction to yield Rabin automata,
such that the corresponding Safra construction as presented in
\cite{schewe2009tighter} is subsumed. In this setting, however, the
presentation of macrostates as ordered tuples of sets is less
natural. Furthermore, in this setting the merges of sets needs to be
restricted to subtrees of green nodes, because there is no total order
of importance of nodes as provided by the ranks.

\bibliography{literature}

\withappendix{
\clearpage
\appendix
\section{Proof of the bijection between ranked Safra trees and ranked slices}
\label{app:bijection}

In this section we provide the missing proof for Lemma~\ref{lem:bijection}.
We make use of the following technical lemma to omit one direction of the proof, using the
property that both functions are already known to be injective:
\begin{lemma}
  \label{lem:injinv}
  Let $f : A \to B, g : B \to A$ be two injective functions and $g \circ f = \op{id}_A$.
  Then it follows that $f$ and $g$ are inverse mappings of a bijection.
\end{lemma}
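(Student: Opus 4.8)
\textbf{Proof plan for Lemma~\ref{lem:injinv}.}
The plan is to first upgrade $g$ to a bijection, then identify $f$ with $g^{-1}$, and finally read off the remaining composition. The key point is that the hypothesis $g \circ f = \op{id}_A$ already forces $g$ to be surjective: for every $a \in A$ we have $a = g(f(a))$, so $a$ lies in the image of $g$. Since $g$ maps into $A$, its image is exactly $A$, and together with the assumed injectivity of $g$ this makes $g \colon B \to A$ a bijection, so $g^{-1}\colon A \to B$ exists.

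Next I would compose $g \circ f = \op{id}_A$ on the left with $g^{-1}$ to obtain $f = g^{-1} \circ g \circ f = g^{-1} \circ \op{id}_A = g^{-1}$. (Equivalently, and more elementarily: for each $a$, applying $g^{-1}$ to $g(f(a)) = a$ gives $f(a) = g^{-1}(a)$.) Hence $f = g^{-1}$ is itself a bijection, and $f \circ g = g^{-1} \circ g = \op{id}_B$. Combined with the hypothesis $g \circ f = \op{id}_A$, this shows $f$ and $g$ are two-sided inverses of each other, i.e.\ they provide a bijection between $A$ and $B$, as claimed.

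There is essentially no obstacle here; the argument works for arbitrary (not necessarily finite) sets, and in fact the stated injectivity of $f$ turns out to be redundant, since it follows from $f = g^{-1}$. In the paper this lemma is only used as a convenience so that, once $\op{safra2slice}$ and $\op{slice2safra}$ are both shown to be injective and one composition is shown to be the identity, the other direction need not be checked separately.
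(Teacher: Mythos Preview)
Your proof is correct. It differs slightly from the paper's argument in the order of ideas: the paper first invokes the fact that the injective map $g$ has \emph{some} left inverse $g^{li}$ (so $g^{li}\circ g=\op{id}_B$), then composes with $f\circ g\circ f$ on the right and uses $g\circ f=\op{id}_A$ twice to deduce $g^{li}=f$, hence $f\circ g=\op{id}_B$. You instead first read off surjectivity of $g$ directly from $g\circ f=\op{id}_A$, upgrade $g$ to a bijection, and then identify $f$ with $g^{-1}$. Both routes arrive at the same conclusion with comparable effort; yours is arguably more elementary since it avoids appealing to the existence of an abstract left inverse, while the paper's version packages the same content into a single chain of compositions. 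Your observation that the injectivity of $f$ is not actually needed is also correct and not made explicit in the paper.
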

\begin{proof}
  From injectivity of $g$ follows that $g$ has a left inverse $g^{li}$. Hence:
  \[
    g^{li} \circ g = \op{id}_B
    \Rightarrow g^{li} \circ \overbrace{g \circ f}^{= \op{id}_A} \circ \overbrace{g \circ
    f}^{= \op{id}_A} = \op{id}_B \circ f \circ\overbrace{g \circ f}^{= \op{id}_A}
    \Rightarrow g^{li} = f
  \]

  As both are left inverses of each other, they are also right inverses of each other by
  definition and therefore $f^{-1} = g$ and $g^{-1} = f$.
\end{proof}

Now we can show the actual statement:
{\renewcommand{\thetheorem}{\ref{lem:bijection}}
\lembijection*
\addtocounter{theorem}{-1}
}

\begin{proof}
  As both mappings are injective, by Lemma~\ref{lem:injinv} it suffices to argue that
  $\op{slice2safra}\circ \mathsf{safra2slice}$ returns the original ranked Safra tree to
  prove the statement.

  Take a ranked Safra tree with $n$ nodes.
    Recall that a parent node always has a smaller rank than its children, and siblings to
  the right of a node have larger ranks than that node.

  Construct the ranked slice $(\alpha, t)$ using $\op{safra2slice}$ and then consider its
  rank tree. Clearly, both trees have the same number of nodes and the same set of labels.
  Also, nodes with the same label also have the same rank.
    What remains to be shown is the equality of the tree structure. Let $\op{node}(S_i)$
  denote the tree node in the original ranked Safra tree that is labelled by set $S_i$
  from $t$.

  Notice that by the visiting order of the post-order traversal it is ensured that
  parent nodes, which have smaller assigned ranks, appear to the right of all their
  descendants in the tuple and thus closer ancestors of a node are listed earlier than
  further ancestors. Thus, the set $S_j$ of the parent $\op{node}(S_j)$ of some node $\op{node}(S_i)$
  will be the closest set to the right of the set $S_i$ in the tuple $t$ which has a smaller rank.
  Hence, by definition of $\parent$, the original tree is reconstructed.
  For the ordering of siblings, observe that if $\op{node}(S_i)$ is the left sibling of
  $\op{node}(S_j)$, then $\op{node}(S_j)$ is visited only after completing the subtree of
  $\op{node}(S_i)$ and hence $S_j$ appears later in the tuple.
    Hence, the rank tree obtained via $\uparrow$ from the ranked slice is exactly the
  original ranked Safra tree.

\end{proof}

 \newpage
\section{From ranked slices to rank trees in linear time}
\label{app:lintime}

It is easy to see that going from a ranked Safra tree to a ranked slice is possible in
linear time, as $\op{safra2slice}$ is just a depth-first traversal. In this section we
show how to efficiently compute $\op{slice2safra}$, i.e., how the rank tree can be
obtained in linear time from a ranked slice, which can be useful in implementations of the
presented determinization construction.

The parent and left subtree boundary relationships, which capture the tree structure of the ranked
slice, can be computed using the following algorithm:

\begin{algorithm}
\begin{algorithmic}
  \Function{unflatten}{$\alpha, n$} \Comment{ranking $\alpha:[n]\to[n]$, $n$ sets in tuple}
    \State $P:= \text{new array}, L:= \text{new array}, S:= \text{new stack}$ \Comment{parent, left sub., parent stack}
            \For{$i:= n \text{\textbf{ down to }} 1$} \Comment{main loop}
      \While{$S \text{ not empty and } \alpha(i) < \alpha(\op{top}(S))$} \Comment{inner loop}
        \State{$\op{L}(\op{top}(S)) := i, \op{pop}(S)$}
      \EndWhile
      \IIf{$S$ not empty}{$\op{P}(i) := \op{top}(S)$}
      \State $\op{push}(S,i)$
          \EndFor

    \While{$S \text{ not empty}$} \Comment{clean-up loop}
      \State{$\op{L}(\op{top}(S)) := 0,\op{pop}(S)$}
    \EndWhile

    \hspace{-2mm}\Return $(P,L)$
  \EndFunction
\end{algorithmic}
\end{algorithm}

\begin{lemma}
  $\textsc{unflatten}$ calculates the rank tree from a ranked slice $(\alpha, t)$ in linear time.
\end{lemma}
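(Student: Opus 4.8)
The plan is to first isolate a loop invariant on the stack $S$, and then read off the correctness of both output arrays from it. Writing the stack contents from top to bottom as a sequence of indices, I would show by induction on the iterations of the outer loop that, immediately after an index $i$ has been pushed, these indices strictly increase while their $\alpha$-values strictly decrease, and that the stack consists exactly of those already-processed indices (all of which are $\ge i$) that have not yet been popped. The push/pop discipline makes it clear that every index is pushed exactly once, so the inductive step only has to check that the inner \textbf{while} loop removes precisely the top prefix of ranks larger than $\alpha(i)$ (so that afterwards every entry of $S$ has rank $< \alpha(i)$, using injectivity of $\alpha$), and that pushing $i$ on top preserves the monotonicity.

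From this invariant I would derive $P(i) = \parent(i)$. If $\parent(i)$ is undefined then no $k$ with $i < k \le n$ has $\alpha(k) < \alpha(i)$; hence after the inner loop nothing is left, so $S$ is empty and $P(i)$ is left unset, as required (this covers $i = n$, where $\alpha(n) = 1$). Otherwise set $m := \parent(i) = \min\{k \mid i < k \le n,\ \alpha(k) < \alpha(i)\}$. I would argue that $m$ is still on $S$ just before $i$ is processed: had $m$ been popped earlier, this would have happened while some $i'$ with $i < i' < m$ was being processed, forcing $\alpha(i') < \alpha(m) < \alpha(i)$ and contradicting minimality of $m$. Since $\alpha(m) < \alpha(i)$ the inner loop does not pop $m$, so $m$ survives it; and since all remaining entries then have index $> i$ and rank $< \alpha(i)$, while none can have index strictly between $i$ and $m$ (such an entry would either have been popped or contradict minimality of $m$), the top of $S$ after the inner loop is exactly $m$.

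Next I would derive $L(j) = \lsibling(j)$. Every index is popped exactly once, either inside the inner loop of some iteration $i$ (setting $L(j) := i$) or in the clean-up loop (setting $L(j) := 0$). If $\lsibling(j) = 0$, no $k < j$ has $\alpha(k) < \alpha(j)$, so no iteration $i < j$ can pop $j$, and $j$ reaches the clean-up loop with $L(j) = 0$, as wanted. Otherwise let $\ell := \lsibling(j) = \max\{k \mid 1 \le k < j,\ \alpha(k) < \alpha(j)\}$. As before, $j$ cannot be popped while processing an index $i'$ with $\ell < i' < j$ (that would contradict maximality of $\ell$), so $j$ is still on $S$ when $\ell$ is processed; at that moment every entry above $j$ on $S$ has index strictly between $\ell$ and $j$ and hence, again by maximality of $\ell$, rank $> \alpha(j) > \alpha(\ell)$, so the inner loop pops all of them and then $j$ itself, assigning $L(j) := \ell$. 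Together with the previous paragraph this shows that \textsc{unflatten} outputs exactly the $\parent$ and $\lsibling$ relations, which encode the rank tree together with its subtree boundaries.

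Finally, the linear-time bound is the standard amortized argument: each of the $n$ indices is pushed once and popped at most once over the whole run (main loop plus clean-up), so the total number of inner-loop iterations is at most $n$, and every other step — array writes, stack operations, and lookups in $\alpha$ — costs $O(1)$. I expect the main obstacle to be the bookkeeping in the two correctness steps: pinning down that $\parent(i)$ (respectively $\lsibling(j)$) is on the stack at exactly the right instant and is exactly the element then exposed on top (respectively the element then popped). The minimality of $\parent$ and the maximality of $\lsibling$ are precisely what exclude interfering indices in between, and a clean statement of the monotone-stack invariant is what keeps these arguments short.
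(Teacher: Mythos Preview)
Your proposal is correct and follows essentially the same approach as the paper: both establish the linear-time bound via the standard amortized push/pop count, and both prove correctness of $P$ and $L$ by arguing that $\parent(i)$ (respectively $\lsibling(j)$) is on the stack at the right moment and is exactly the element exposed on top (respectively popped), using the minimality of $\parent$ and maximality of $\lsibling$ to exclude interfering indices. The only difference is organizational: you state the monotone-stack invariant explicitly up front and then read off both facts from it, whereas the paper leaves the invariant implicit and argues each case directly by contradiction; your version is a little cleaner in that respect.
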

\begin{proof}
  Let $(\alpha, t)$ be a ranked slice with $t=(S_1, \ldots, S_n)$.
  The main loop is repeated exactly $n$ times. In each iteration one index is
  pushed onto the stack. All other loops are iterated at most $n$ times in total, as they
  require a non-empty stack and pop an element in each iteration. Hence, the algorithm
  completes in linear time. It remains to be shown that the resulting arrays $P$ and $L$
  correspond to $\parent$ and $\lsibling$.

  In the first iteration of the main loop the stack is empty and hence only the index of
  the last set $S_n$ with rank $1$ (i.e., the root) is pushed. This index is never popped
  from the stack during the main loop, as is has the smallest rank, and in the last
  iteration of the clean-up loop is assigned $0$ as left subtree boundary index and has an undefined
  parent. Therefore, the root is treated correctly.

  Next we analyze under which conditions
  the assignments to $P$ and $L$ can be incorrect and show that they lead to a contradiction.
  Let $j$ and $k$ with $1\leq j<k\leq n$ be two indices. Notice that index $k$ is pushed onto the stack
  before $j$ by the main loop.

  First consider the case that $\parent(j) = k$.

  For contradiction, assume that $k$ is popped from the stack before the loop iteration with $i=j$
  ($i$ referring to the variable in the algorithm). This implies that
  there is an index $l$ with $j < l < k$, $\alpha(l) < \alpha(k)$ and $k$ being on top
  of the stack in the loop iteration with $i=l$.
  But as $\alpha(j) > \alpha(l) > \alpha(k)$ violates the assumption that $\parent(j) = k$
  by definition of $\parent$, because then $l$ should be the parent of $j$.

  Next, assume that $k$ is not on top of the stack in the iteration with $i=j$. Then there
  is an $l$ with $j<l<k$ and $\alpha(l) > \alpha(k)$   which is pushed onto the stack and stays there until $P(j)$ is assigned. As $\alpha(l) > \alpha(j)$,
  $j$ must be in a different subtree to the left of $l$. But then index $l$ must be
  removed in the inner loop during the iteration $i=j$, before $P(j)$ is assigned, which
  is a contradiction. Hence, we conclude that the parent array is assigned correctly.

  Now, consider the case that $j=\lsibling(k)$.

  First assume that $j=0$. Then there is no index with a smaller rank to the left
  of $k$ and hence the inner loop was not entered whenever $k$ was on top of the stack, so
  that $k$ remained on the stack during the main loop. But then $k$ will be assigned $0$
  as left subtree boundary in the clean-up loop.

  Now assume that $j>0$, i.e., there exists a non-trivial left subtree boundary.
  Again, assume for contradiction that $k$ is popped before loop iteration with $i=j$.
  Then there was an index
  $l$ with $j < l < k$ such that $k$ was on top of the stack and $\alpha(l) < \alpha(k)$,
  so that $k$ was popped from the stack in the inner loop. By definition of
  $\lsibling$ this implies that $l$ is the left subtree boundary of $k$, violating the assumption.

  Finally, assume that $k$ is not on top of the stack when $i = j$. Then some index $l$ with
  $j<l<k$ was pushed onto the stack with $\alpha(l) > \alpha(k)$ and in the main loop
  iteration with $i=j$ the last such index $l$ was assigned as parent of $j$, because
  $\alpha(j) > \alpha(l)$ (implied by the fact that $l$ remained on the stack).
  But then $\alpha(j) > \alpha(k)$, contradicting the assumption that $j$ is the left
  subtree boundary index of $k$, which must have a smaller rank.
      Hence we conclude that the left subtree boundary must also be assigned correctly,
  completing the proof.

\end{proof}
 \newpage
\section{Correctness proof for the determinization construction}
\label{app:correctness}

In this section, we provide the proof for the main result:

\thmcorrectness*

As discussed, the claim on the state complexity directly follows from
the upper bound given in \cite[Proposition~2]{schewe2009tighter}, and the bijection
between the set of ranked slices and the set of ranked Safra trees presented in
Section~\ref{sec:bijection}.

Hence, it remains to be shown that an automaton $\mc{B}$ with transitions that satisfy the
described successor relation of the determinization construction presented in
Section~\ref{sec:construction} indeed recognizes the same language as $\mc{A}$.

For the rest of this section, fix some arbitrary word $w\in \Sigma^\omega$, NBA $\mc{A}$
and let $\mc{B}$ be the obtained TDPA.

\subsection{From an accepting NBA run to an accepting TDPA run}

To show that $\mc{B}$ accepts $w$ if there exists an accepting run in $\mc{A}$, we  introduce
the concept of the \emph{rank-profile} of a state in a pre-slice $(\alpha,t)$, which is just the sequence of nodes in
the corresponding ranked Safra tree starting from the root, down to the node labelled by the set containing the
state. The sequence is strictly ascending, because children have larger ranks
than their parents. Formally, the \emph{rank-profile} $\op{rp} : Q_t \to [n]^*$ maps each
state $q$ to a sequence $a_1a_2\ldots a_m$ such that $a_1=\alpha(n)=1$, $a_m = \alpha(q)$,
and $a_{i-1} = \op{\uparrow_\alpha}(a_i)$ is the rank of the parent
of the node with rank $a_i$, for all $i>1$.
Consider, for example, the ranked slice
$(\{q_3\}^4,\{q_1\}^2,\{q_2\}^3,\{q_0\}^1)$. The rank profile of $q_3$
is $1,2,4$, and the rank profile of $q_2$ is $1,3$.
Notice that rank-profiles have at
most the length $|Q|$ and are fully determined by the ascending set of
ranks, so we can interpret the profile also as a set of ranks.

Let $a=a_1\ldots a_{n_a},b=b_1\ldots b_{n_b}$ be two rank profiles with $m:=\min(n_a,n_b)$
being the length of the shorter one. We say that $a$ is \emph{better} than $b$
and write $a \prec b$, if $a_1\ldots a_m < b_1\ldots b_m$ wrt.\
lexicographic order, or if $a_1\ldots a_m=b_1\ldots b_m$ and $n_a >
n_b$. So the ``better than'' relation almost corresponds to ``being
smaller in the
lexicographic ordering'' but with the difference that
in our ordering a strict prefix of another rank profile is not better
but worse.
This definition is useful because of the following property:

\begin{lemma}
  \label{lem:rp_idx}
  Let $(\alpha,t)$ be a ranked slice, and $p,q\in Q_t$.

  Then $\op{idx}(p) < \op{idx}(q) \Longleftrightarrow \op{rp}(p) \prec \op{rp}(q)$.
\end{lemma}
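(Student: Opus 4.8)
The plan is to prove both directions by relating the tuple index ordering to the structure of the rank tree, using the definitions of $\parent$, $\lsibling$, and the rank profile. Since $\op{rp}(q)$ is the ascending sequence of ranks on the path from the root to the node hosting $q$, and tuple indices of nodes in the rank tree follow a post-order traversal (by the bijection of Section~\ref{sec:bijection} and the way $\parent$ is defined), the key is to translate "comparing indices" into "comparing root-to-node paths". I would first observe that it suffices to prove one implication, say $\op{idx}(p) < \op{idx}(q) \Rightarrow \op{rp}(p) \prec \op{rp}(q)$: since $\prec$ is a strict total order on the rank profiles occurring in $(\alpha,t)$ (distinct nodes have distinct rank profiles, as a node is determined by its path from the root) and $\op{idx}$ is injective on nodes, a standard argument gives the converse for free.

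For the forward direction, I would argue by analysing the relative position of the nodes hosting $p$ and $q$ in the rank tree. Let $i = \op{idx}(p)$, $j = \op{idx}(q)$, with $i < j$. The first case is that node $i$ lies in the subtree of node $j$: by the remark after the definition of $\lsibling$, the indices $\lsibling(j)+1,\ldots,j$ are exactly the subtree of $j$, so $i$ being in this subtree means the path to node $i$ strictly extends the path to node $j$ — hence $\op{rp}(q)$ is a strict prefix of $\op{rp}(p)$, which by the (deliberately chosen) definition of $\prec$ gives $\op{rp}(p) \prec \op{rp}(q)$. The second case is that $i$ is not in the subtree of $j$. Then consider the deepest common ancestor $c$ of nodes $i$ and $j$ in the rank tree, and let $i'$ and $j'$ be the children of $c$ that are ancestors of (or equal to) $i$ and $j$ respectively; these are distinct siblings. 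Since the rank tree orders siblings by tuple index and the whole subtree of $i'$ precedes $j'$ in post-order (a node appears after all its descendants and its older siblings), $i < j$ forces $i'$ to be the left sibling, i.e. $\op{idx}(i') < \op{idx}(j')$. Now I would invoke that siblings to the right have larger ranks (a property of ranked Safra trees, equivalently of rank trees), so $\alpha(i') < \alpha(j')$. The rank profiles $\op{rp}(p)$ and $\op{rp}(q)$ share the prefix up to rank $\alpha(c)$, then diverge with $\alpha(i')$ versus $\alpha(j')$, so $\op{rp}(p) < \op{rp}(q)$ lexicographically on the common length, giving $\op{rp}(p) \prec \op{rp}(q)$.

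For the converse, I would note: $\op{rp}$ is injective on $Q_t$ up to the node it lands in — more precisely, $\op{rp}(p) = \op{rp}(q)$ iff $p$ and $q$ are hosted by the same node iff $\op{idx}(p) = \op{idx}(q)$. Thus $\op{idx}(p) \neq \op{idx}(q) \Leftrightarrow \op{rp}(p) \neq \op{rp}(q)$, and since $\prec$ is a total order on the finitely many profiles occurring, if $\op{rp}(p) \prec \op{rp}(q)$ then we cannot have $\op{idx}(p) > \op{idx}(q)$ (that would give $\op{rp}(p) \succ \op{rp}(q)$ by the forward direction) nor $\op{idx}(p) = \op{idx}(q)$, so $\op{idx}(p) < \op{idx}(q)$.

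The main obstacle I anticipate is making the post-order/sibling-order argument fully rigorous, i.e. carefully justifying that in the tuple $t$, the entire subtree of a left sibling precedes its right siblings and that the path-divergence picture (common ancestor $c$, diverging children $i', j'$) faithfully matches the definitions of $\parent$ and $\lsibling$ rather than an informal tree picture. This is really a matter of unwinding Lemma~\ref{lem:bijection} and the observations about $\parent$ and $\lsibling$ stated just before it; once those are in hand, the comparison of rank profiles reduces to comparing two paths in a tree that share a prefix, which is routine. The slightly delicate convention is that a strict prefix counts as \emph{worse} under $\prec$, so I would be careful that the "subtree" case lands on the correct side of the inequality.
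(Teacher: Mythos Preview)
Your proposal is correct and follows essentially the same approach as the paper: a two-case analysis for the forward direction (either $p$'s node is in the subtree of $q$'s node, giving the prefix case, or else the profiles diverge at the children of a common ancestor, giving the lexicographic case), with the converse obtained from the forward direction. The only minor difference is that the paper dispatches the converse by ``reversing this case analysis,'' whereas you argue it via totality of $\prec$ and contraposition; your version is arguably cleaner but not substantively different.
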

\begin{proof}
  If $\op{idx}(p) < \op{idx}(q)$, then, by definition of the rank tree, either $p$ is in the
  subtree of $\alpha(q)$ or in the subtree of some left sibling of $q$ or one of its
  ancestors. In the first
  case, the rank profile of $p$ must agree with the rank profile of $q$ on the whole
  length of the latter and then, as $p$ is a descendant, there must be at least one additional node
  along the branch leading to $p$ from $q$, i.e., the rank profile is longer. Hence,
  $\op{rp}(p) \prec \op{rp}(q)$. In the second case, the rank profiles of $p$ and $q$ must
  agree on some prefix up to their latest common ancestor, whereas afterwards the rank
  profile of $p$ continues with a node at the root of some different subtree, which is
  located more to the left in the rank tree and has a smaller rank.
  But then the rank profile is lexicographically smaller and again we have $\op{rp}(p) \prec \op{rp}(q)$.
    Clearly, reversing this case analysis shows the other direction.

\end{proof}

The \emph{$k$-cut} of a rank-profile $a$ is a prefix of the rank-profile including all
ranks less than $k$ and, if possible, the first rank which is larger than or equal to $k$.
Formally, $\cut{k}(a_1\ldots a_n) := a_1\ldots a_i$ with $i := \min(\{n\}\cup\{i \mid a_i\geq k\})$.
Consider, for example, the rank profile $1,2,4$. For $k \ge 3$, the $k$-cut is the
complete profile $1,2,4$, the $2$-cut is $1,2$, and the $1$-cut is $1$.
The ordering $\preceq_k$ compares the $k$-cuts of two rank profiles, i.e., for rank
profiles $a$ and $b$, $a \preceq_k b$ iff $\cut{k}(a) \preceq \cut{k}(b)$ and consequently,
$a =_k b$ iff $a \preceq_k b$ and $b\preceq_k a$. For example, $1,2,4 =_2 1,2,5$, but
$1,2,4 \prec_3 1,2,5$.

We write $\cut{k}(q)$ for $\cut{k}(\op{rp}(q))$.
This notion is useful when we analyze how the rank profile can change over time in
case that the smallest rank that is active infinitely often is $k$.

In the presentation of the construction in Sections~\ref{sec:mullerschupp} and
\ref{sec:construction} it was specified that the rank assigned to the left child sets
during the $\op{step}$ operation is the same fresh rank, and only after $\op{normalize}$ all sets have
different ranks. To simplify the reasoning in the proofs, in the following we assume that
the left child sets all get fresh and unused, but already \emph{different} ranks. In this case,
$\op{normalize}$ only closes ``gaps'' due to removed ranks.  It is easy to
see, that both formulations are equivalent, but this variant has the advantage that
the rank-tree (and also rank-profiles) is also defined on all intermediate pre-slices and
that $\op{normalize}$ never makes the rank assigned to a set larger than before.

We start by showing that the constructed DPA $\mc{B}$ accepts all
words that are accepted by the NBA $\mc{A}$.

\begin{lemma}
  \label{lem:nba_to_dpa}
  $w\in L(\mc{A}) \implies w \in L(\mc{B})$
\end{lemma}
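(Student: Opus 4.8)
### Proof proposal

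The plan is to start from an accepting run $\rho = q_0 q_1 q_2 \cdots$ of $\mc{A}$ on $w$, and to track, along the run of $\mc{B}$, the sequence of rank-profiles $\op{rp}_i(q_i)$ of the state $q_i$ in the $i$-th macrostate. First I would observe that, because every state $q$ in a macrostate has some set $\Delta_t(S_i,x)$ covering at least one successor of $q$ (the reduced-split-tree / slice construction never drops states that have successors in the full transition relation; a state only disappears when its successors are all already covered by an earlier set, in which case some other copy of each successor survives), the run $\rho$ always has a "shadow" in the slices: there is an index $r_i$ with $q_i \in S_{r_i}$ for the $i$-th macrostate $(\alpha_i, t_i)$. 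So the sequence $\op{rp}_i(q_i)$ is well-defined for all $i$.

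The core of the argument is to understand how $\op{rp}_i(q_i)$ evolves under $\op{step}$, $\op{prune}$, $\op{merge}$, $\op{normalize}$, and to extract from it an infimum rank that behaves well. I would fix $k^* := \liminf_i (\text{smallest rank active at step } i)$ along this run — or more precisely, argue that there is a rank $k$ that is the minimal rank appearing on infinitely many transitions, and show (i) from some point on, no rank $< k$ is ever active, so by the constraints on $\op{prune}$ and $\op{merge}$ the prefix $\cut{k}(\op{rp}_i(q_i))$ of the rank-profile that lies strictly below $k$ stabilizes (it can only decrease in the $\preceq$ ordering — ranks below $k$ never get merged or relocated — and it lives in a finite set, so it is eventually constant); and (ii) restricted to those infinitely many steps where rank $k$ is active, $k$ is green infinitely often. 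For the second point, I would use the key feature of the construction that when $q_i$ passes from $S_{r_i}$ through $\op{step}$, its successor $q_{i+1}$ lands in either the left child (an accepting step, fresh large rank) or the right child (inherits the parent rank). Each time $\rho$ visits an accepting state, the $k$-cut of its rank-profile strictly improves or the profile extends below position $k$; between accepting visits it can only get worse in a bounded way, governed exactly by the red/green events at rank $\geq k$. Since $\rho$ visits $F$ infinitely often, the rank $k$ (the eventual minimum of the stabilized-below-$k$ part) cannot be red cofinitely — a red event at $k$ would destroy the node with rank $k$ infinitely often while it keeps being recreated, and I would argue this forces a strictly smaller active rank infinitely often, contradicting minimality of $k$. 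Hence $k$ is green infinitely often, so the priority $2k$ (even) is the minimal priority seen infinitely often, and the run of $\mc{B}$ is accepting.

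Concretely, the steps in order: (1) define the shadow index $r_i$ and show it exists for all $i$, using that $\op{step}$ preserves coverage of successors and $\op{prune}$/$\op{merge}$/$\op{normalize}$ only relabel/merge; (2) show $\op{rp}_{i+1}(q_{i+1})$ is obtained from $\op{rp}_i(q_i)$ by a controlled transformation — identify exactly when it can get $\prec$-worse (only via red events at ranks $\ge$ the current relevant threshold) and when it gets better (accepting transitions, and green events that pull it up); (3) let $k$ be the minimal rank active infinitely often; show that eventually no rank $< k$ is active and deduce the below-$k$ part of the profile stabilizes; (4) show $k$ is green infinitely often, by ruling out the cofinally-red case via a contradiction with minimality; (5) conclude the minimal infinitely-recurring priority is $2k$, hence even, hence $w \in L(\mc{B})$. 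I would lean on Lemma~\ref{lem:rp_idx} (the idx/rank-profile correspondence) and on the $\cut{k}$, $\preceq_k$ machinery just introduced — these are exactly tailored to make step (3) clean.

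The main obstacle I expect is step (4), the green-infinitely-often argument — i.e., correctly ruling out that rank $k$ is red cofinitely while still being "the" minimal active rank. The subtlety is that under the generalized $\op{merge}$ the node with rank $k$ can absorb material, so one has to argue carefully (using the $\op{merge}$ constraints: the set with rank $k$ is never merged with anything to its right, and nothing below $k$ is touched) that the shadow run $\rho$, which visits $F$ infinitely often, cannot keep "falling out" of every recreated rank-$k$ node without triggering an even smaller active rank. This is essentially the heart of the Safra correctness argument — a pigeonhole/König-type observation that infinitely many accepting visits confined below a stabilized finite prefix must feed a genuine green event at the boundary rank $k$ — adapted to the slightly more permissive merge rule; the earlier steps are bookkeeping by comparison.
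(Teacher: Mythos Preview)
Your overall architecture matches the paper's proof exactly: track the rank profile $p_i = \op{rp}(q_i)$ of the accepting run along the slices, let $k$ be the smallest rank that is active infinitely often, use the $\cut{k}$/$\preceq_k$ machinery to show the $k$-cut eventually stabilizes, and then argue that $k$ must be green infinitely often (and red only finitely often). Your steps (1)--(3) are essentially the paper's steps~1 and~2; the paper also proves the monotonicity $p_i \succeq_{k_i} p_{i+1}$ with respect to the \emph{per-step} dominating rank $k_i$, not just the limit $k$, which is what makes the stabilization argument go through cleanly once $k_i \geq k$ holds forever.

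The genuine gap is your step~(4). The contradiction you propose --- ``a red event at $k$ \ldots\ forces a strictly smaller active rank infinitely often, contradicting minimality of $k$'' --- does not work. When rank $k$ is red it is simply deleted; after $\op{normalize}$ the former rank $k{+}1$ slides down to occupy slot $k$, and no rank $<k$ is activated at all. So $k$ being red infinitely often is perfectly compatible with $k$ remaining the minimal infinitely-active rank, and nothing in your sketch excludes it. (Relatedly, ``$k$ green infinitely often'' is not the right target: you need ``$k$ red only finitely often'', otherwise priority $2k{-}1$ recurs and the run is rejecting.) The paper fills this gap with an intermediate claim you are missing: once the $k$-cut has stabilized, its last entry $m_i$ satisfies $m_i \geq k$. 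This is where the accepting visits of $\rho$ are actually spent --- if ever $m_i < k$, then at the next visit to $F$ the state moves into a fresh left child, so either the right child carrying rank $m_i$ is empty (making $m_i < k$ active, impossible) or the $k$-cut strictly extends (contradicting stabilization). With $m_i \geq k$ in hand, one argues directly that $k$ is never red after $i_0$: a red event at $k$ not caused by a smaller rank overwriting it would force all sets at and left of rank $k$'s position in $\hat t$ to be empty, yet $q_{i+1}$ must live in one of those sets because $m_i \geq k$ pins $q_i$'s tuple index to the left of (or at) the rank-$k$ position. That positional consequence of $m_i \geq k$ --- not any pigeonhole/K\"onig reasoning --- is the missing idea.
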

\begin{proof}
  Let $\rho=q_0q_1\ldots$ be an accepting run of $\mc{A}$ on $w$.
  Take the run of $\mc{B}$ on $w$ and consider its sequence of ranked slices
  $s_0,s_1,\ldots$ with $s_i = (\alpha_i, t_i)$.
    Let $p_0, p_1, \ldots$ be the sequence of rank-profiles such that
  $p_i=\op{rp}(q_i)$ are the rank profiles of states along the run,
            let $k_0, k_1, \ldots$ be the sequence of dominating ranks, where $k_i$ is the
  dominating rank in the transition from $s_i$ to $s_{i+1}$,
  let $k$ be the smallest $k_i$ appearing infinitely often in this sequence,
  and finally, let $c_i := \cut{k_i}(p_i)$ and $m_i := \max c_i$ be the $k_i$-cuts of the rank
  profiles and the last rank along those prefixes, respectively.
    The proof is structured as follows:
  \begin{enumerate}
    \item First, we show that $p_i \succeq_{k_i} p_{i+1}$ holds in every transition.
    \item This implies that for some time $i_0$, $p_i =_k p_{i+1}$ must hold for all $i>i_0$.
    \item We show that the last ranks $m_i$ of the $k_i$-cuts $c_i$ must be $\geq k$ for all $i>i_0$.
    \item Finally, we show that rank $k$ cannot be red after $i_0$ and thus must be inf.\ often green.
  \end{enumerate}

  \textbf{1}:
    Pick some time $i$ and consider the transition from $i$ to $i+1$.

  Notice that regardless of the current dominating rank $k_i$, the rank profile $p_i$ cannot
  get worse (i.e., increase) during $\op{normalize}$ or $\op{step}$.
    For $\op{normalize}$ this is easy to see, because by definition it must preserve the
  relative ordering and only closes the unused rank ``gaps'' after removal of empty sets
  and eventual merges. Hence it can make a rank profile only lexicographically smaller in
  each position by reassigning ranks, without changing the length.
    For $\op{step}$, observe that if $q_{i+1} \in \Delta_{t_i}(q_i, w(i+1))$, then $\op{step}$ either
  puts $q_{i+1}$ into a set with the same rank as before (right branch) or, if $q_{i+1}$
  is accepting (left branch), moves it into a child wrt. the rank tree, thereby increasing
  the length of the rank profile. If $q_{i+1} \not\in \Delta_{t_i}(q_i, w(i+1))$, then, by
  definition of the restricted transition relation $\Delta_{t_i}$, $q_{i+1}$ must have a
  predecessor $q_i'$ in a set with lower tuple index. By Lemma~\ref{lem:rp_idx} this
  implies that $\op{rp}_{\hat{t}_{i}}(q_{i+1}) \preceq \op{rp}_{t_i}(q_i') \prec
  \op{rp}_{t_i}(q_i)$. Therefore, the \op{step} operation also can only make the rank
  profile better.

  As we have the dominating rank $k_i$, it means that no rank $<k_i$ was red or green,
  which means for the rank profile $p_i$ that no rank $<k_i$ marked an empty set after
  $\op{step}$ or was overwritten during $\op{prune}$. Also, $\op{merge}$ may not modify
  sets with ranks $<k_i$. Hence, if $m_i < k_i$, then clearly the $k_i$-cut $c_i$ is not
  influenced in the transition.

  If $m_i \geq k_i$, notice that regardless whether $k_i$ was red or green, at least one rank
  was removed during $\op{prune}$. If $k_i$ itself was red, then all sets with ranks $>k_i$ will be
  reassigned a smaller rank to close the gap. If $k_i$ was green, it means that it has
  overwritten some larger rank during $\op{prune}$. If the overwritten rank was $m_i$,
  then the $k_i$-cut decreased directly. Otherwise, some other rank was
  overwritten and the set with rank $m_i$ either keeps the same rank during
  $\op{normalize}$ (if $m_i$ was smaller than the overwritten rank), or it will be
  reassigned to smaller rank (if it was larger). Neither $\op{prune}$ nor $\op{merge}$ can
  make the $k_i$-cut shorter than before, as for $\op{prune}$ this would imply that a rank
  $<k_i$ was green and for $\op{merge}$, that it collapses sets with ranks $<k_i$, which
  is forbidden.

  \textbf{2}:
  There is some time $i'$ after which no rank $<k$ is dominating again. After that, we
  have $p_i \succeq_k p_{i+1}$ for all $i > i'$, as $p_i \succeq_{k'} p_{i+1}$ implies
  that $p_i \succeq_k p_{i+1}$ for all $k'>k$ (if one sequence is not lexicographically
  larger than the other, clearly this also applies to their prefixes). Hence, eventually,
  after some $i_0 > i'$ we have that $p_i =_k p_{i+1}$ for all $i > i_0$, i.e., the
  $k$-cut prefix eventually stabilizes and the rank profile then can only change at
  positions after this prefix.

  \textbf{3}:
  Next, we show that $m_i \geq k$ for all $i>i_0$. For contradiction, assume that $m_i < k$ at
  some time $i>i_0$. Observe that the accepting run $\rho$ visits an accepting state $q_F$
  infinitely often. This means, that eventually $q_F$ goes into the left child set during
  $\op{step}$, which would either make $m_i$ (which is  $< k$ by assumption) green, if the
  right child set is empty, or otherwise it would make the $k$-cut longer, violating either the fact
  that no rank $<k$ is active after $i_0$, or that the $k$-cut does not change anymore
  after $i_0$.

  \textbf{4}:
  Finally, observe that $k$ cannot be red after $i_0$. If $k$ was red after $i_0$, it means
  that it was either overwritten by some smaller green rank, which we excluded by choice
  of $k$ and $i_0$, or it would mark an empty set after $\op{step}$ and be removed during
  prune, which is also not possible, because if a rank is removed not due to being
  overwritten, it means that all sets to the left were empty, whereas some set must
  contain the current state of the run $\rho$, and it cannot be to the right of $k$
  because $m_i \geq k$.

  As $k$ is active infinitely often and cannot be red after $i_0$, it can only be red
  finitely often and must be green infinitely often, which implies that the smallest
  assigned priority is even and $\mc{B}$ accepts $w$.

\end{proof}

\subsection{From an accepting TDPA run to an accepting NBA run}

To show the other direction, it is convenient to define the notion of the run-DAG.
Intuitively, the run-DAG can be obtained from the
reduced split-tree (see Section~\ref{sec:mullerschupp}) by applying the corresponding
$\op{merge}$ operation on each level of the tree (i.e., merging the corresponding sets and
redirecting the edges to the new union set), before constructing the next level.

\begin{definition}
  Let $w\in \Sigma^\omega$ and $s_0, s_1, \ldots$ with $s_i = (\alpha_i, t_i)$ the
  sequence of macrostates on $w$.
    The \emph{run-DAG} of $\mc{B}$ on $w$ is defined as follows. Level $i$ of
  the DAG has the sets $S_{i,1}, \ldots, S_{i,n}$ of the tuple $t_i$ as nodes. An edge
  $S_{i,j} \to S'_{i+1,j'}$ exists between sets on two adjacent levels $i$ and $i+1$ of
  the DAG, if the target set contains a non-trivial subset of normalized successors of the
  source set, or formally, if $\emptyset \neq \Delta_{t_i}(S_{i,j}, w(i)) \subseteq
  S'_{i+1,j'}$. The edge is \emph{marked} (written $S_{i,j} \overset{M}{\to}
  S'_{i+1,j'}$), if $\Delta_{t_i}(S_{i,j})\cap S'_{i+1,j'} \subseteq F$,
  i.e., if it contains no non-accepting normalized successors from the source set.
  Otherwise, it is \emph{unmarked}.
  An infinite path in the run-DAG is a \emph{bad path}, if it contains only unmarked edges.
\end{definition}

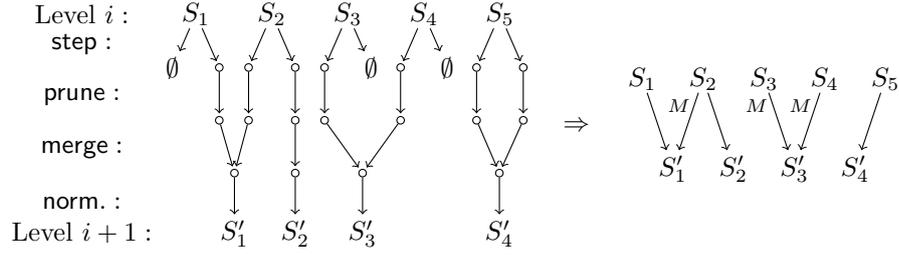
\begin{figure}[htbp]
  \begin{center}
    \begin{tikzpicture}[baseline={([yshift=-.5ex]current bounding box.center)},
      shorten >=1pt,node distance=1cm,inner sep=1pt,on grid,auto]
      \node   (l1)                 {Level $i:$};
      \node[right=of l1,xshift=5mm]   (s1)                 {$S_1$};
      \node[right=of s1]   (s2)                 {$S_2$};
      \node[right=of s2]   (s3)                 {$S_3$};
      \node[right=of s3]   (s4)                 {$S_4$};
      \node[right=of s4]   (s5)                 {$S_5$};

      \node[below=of l1,yshift=6mm]   (l2)                 {$\op{step}:$};

      \node[below left=of s1,xshift=4mm]           (st1)        {$\emptyset$};
      \node[state,below right=of s1,xshift=-4mm]   (st2)        {};
      \node[state,below left=of s2,xshift=4mm]     (st3)        {};
      \node[state,below right=of s2,xshift=-4mm]   (st4)        {};
      \node[state,below left=of s3,xshift=4mm]     (st5)        {};
      \node[below right=of s3,xshift=-4mm]         (st6)        {$\emptyset$};
      \node[state,below left=of s4,xshift=4mm]     (st7)        {};
      \node[below right=of s4,xshift=-4mm]         (st8)        {$\emptyset$};
      \node[state,below left=of s5,xshift=4mm]     (st9)        {};
      \node[state,below right=of s5,xshift=-4mm]   (st10)       {};

      \node[below=of l2,yshift=3mm]   (l3)                 {$\op{prune}:$};

      \node[state,below=of st2,yshift=3mm]   (pr1)        {};
      \node[state,below=of st3,yshift=3mm]   (pr2)        {};
      \node[state,below=of st4,yshift=3mm]   (pr3)        {};
      \node[state,below=of st5,yshift=3mm]   (pr4)        {};
      \node[state,below=of st7,yshift=3mm]   (pr5)        {};
      \node[state,below=of st9,yshift=3mm]   (pr6)        {};
      \node[state,below=of st10,yshift=3mm]  (pr7)        {};

      \node[below=of l3,yshift=3mm]   (l4)                 {$\op{merge}:$};

      \node[state,below=of pr1,yshift=3mm,xshift=2mm]   (me1)        {};
      \node[state,below=of pr3,yshift=3mm]   (me2)        {};
      \node[state,below=of pr4,yshift=3mm,xshift=5mm]   (me3)        {};
      \node[state,below=of pr6,yshift=3mm,xshift=3mm]   (me4)        {};

      \node[below=of l4,yshift=3mm]   (l5)                 {$\op{norm.}:$};

      \node[below=of me1,yshift=2mm]   (n1)        {$S'_1$};
      \node[below=of me2,yshift=2mm]   (n2)        {$S'_2$};
      \node[below=of me3,yshift=2mm]   (n3)        {$S'_3$};
      \node[below=of me4,yshift=2mm]   (n4)        {$S'_4$};

      \node[below=of l5,yshift=6mm]   (l6)                 {Level $i+1:$};

      \path[->]
        (s1) edge (st1)
        (s1) edge (st2)
        (s2) edge (st3)
        (s2) edge (st4)
        (s3) edge (st5)
        (s3) edge (st6)
        (s4) edge (st7)
        (s4) edge (st8)
        (s5) edge (st9)
        (s5) edge (st10)
        (st2) edge (pr1)
        (st3) edge (pr2)
        (st4) edge (pr3)
        (st5) edge (pr4)
        (st7) edge (pr5)
        (st9) edge (pr6)
        (st10) edge (pr7)
        (pr1) edge (me1)
        (pr2) edge (me1)
        (pr3) edge (me2)
        (pr4) edge (me3)
        (pr5) edge (me3)
        (pr6) edge (me4)
        (pr7) edge (me4)
        (me1) edge (n1)
        (me2) edge (n2)
        (me3) edge (n3)
        (me4) edge (n4)
        ;
    \end{tikzpicture}
    $\quad\Rightarrow\quad$
    \begin{tikzpicture}[baseline={([yshift=-.5ex]current bounding box.center)},
      shorten >=1pt,node distance=12mm and 8mm,inner sep=1pt,on grid,auto]
      \node[]   (s1)                 {$S_1$};
      \node[right=of s1]   (s2)                 {$S_2$};
      \node[right=of s2]   (s3)                 {$S_3$};
      \node[right=of s3]   (s4)                 {$S_4$};
      \node[right=of s4]   (s5)                 {$S_5$};

      \node[below=of s1,xshift=4mm]   (n1)        {$S'_1$};
      \node[right=of n1]   (n2)        {$S'_2$};
      \node[right=of n2]   (n3)        {$S'_3$};
      \node[right=of n3]   (n4)        {$S'_4$};
      \path[->]
        (s1) edge (n1)
        (s2) edge node[swap,near start,yshift=-1mm] {$\scriptstyle M$}(n1)
        (s2) edge (n2)
        (s3) edge node[swap,near start,yshift=1.5mm,xshift=-1pt] {$\scriptstyle M$} (n3)
        (s4) edge node[swap,near start,yshift=-1mm] {$\scriptstyle M$} (n3)
        (s5) edge node[swap,near start,yshift=-1mm] {} (n4)
      ;
    \end{tikzpicture}

  \end{center}
  \caption{On the left, an abstract illustration of some transition of the determinization
  construction is presented, where the nodes represent non-empty sets and edges show the
  relationship between sets. In $\op{step}$ the successors are separated, in $\op{prune}$
  the successor sets (wrt. the normalized successor relation) which are empty are removed, in
  $\op{merge}$ adjacent non-empty sets are merged and in $\op{normalize}$ the final ranks are
  assigned. On the right, the resulting run-DAG edges for this transition are depicted,
  marked edges are denoted with $M$ and indicate that only accepting successor states of
  states in the source set are contributed to the target set of the edge.}

  \label{fig:rundag}
\end{figure}

See Figure~\ref{fig:rundag} for an illustration of the relationship of the determinization
transitions and the run-DAG.
We also need the following related definitions, collected here for reference:

\begin{definition}
  \label{def:dpanbadefs}

  Let $w\in L(\mc{B})$ and let $k$ be the smallest rank that is infinitely often green and
  finitely often red during the run of $\mc{B}$ on $w$, i.e. the rank witnessing the acceptance.

  Let $i_0$ be some time after which no rank $<k$ is ever active again and $k$ is never red again.

  Let $r_i$ denote the tuple index of the set with rank $k$ at time $i$, and let $l_i <
  r_i$ be the index of the \emph{bad left neighbor}, defined as the maximal index less
  than $r_i$ such that there is a bad path of the run-DAG starting in the set $S_{i,l_i}$
  (if no such left bad neighbor exists, let $l_i := 0$).

  Finally, let $\op{good}(i) := \bigcup_{j=l_i+1}^{r_i} S_{i,j}$.

\end{definition}

Intuitively, if the DPA accepts a word, this means that the smallest rank $k$ which is
infinitely often active is only finitely often red, and therefore will not be reassigned
to a completely unrelated set of states in the ranked slices.
We show that $k$ must eventually mark
sets containing states of at least one accepting run infinitely often. Unfortunately,
considering only the sets with rank $k$ in isolation is not sufficient to verify that
this is the case, but the following Lemma~\ref{lem:goodpieces} gives
us a sequence of ``checkpoints'' and sets such that we can obtain an infinite sequence of
suitable pieces to construct an accepting NBA run using König's Lemma.
An illustration of this technical result is found in Figure~\ref{fig:intervallschlauch}.

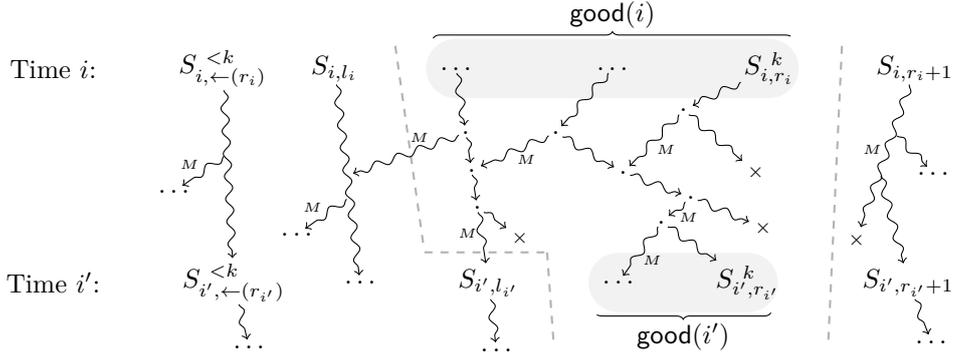
\begin{figure}[!htbp]
  \begin{center}
    \begin{tikzpicture}[baseline={([yshift=-.5ex]current bounding box.center)},
      shorten >=1pt,node distance=1cm,inner sep=1pt,auto]
      \node (labi) {Time $i$:};
      \node[below=2.5cm of labi] (labi2) {Time $i'$:};

      \node[right=of labi] (lsib1) {$S_{i,\lsibling(r_i)}^{\ <k}$};
      \node[right=5mm of lsib1] (bln1) {$S_{i,l_i}$};
      \node[right=of bln1] (anon1) {$\ldots$};
      \node[right=3.5cm of anon1] (kr1) {$S_{i,r_i}^{\ k}$};
      \node[right=of kr1] (rn1) {$S_{i,r_i+1}$};

      \node[right=of labi2] (lsib2) {$S_{i',\lsibling(r_{i'})}^{\ <k}$};
      \node[right=7mm of lsib2] (bln2) {$\ldots$};
      \node[right=of bln2] (anon2) {$S_{i',l_{i'}}$};
      \node[right=2.5cm of anon2] (kr2) {$S_{i',r_{i'}}^{\ k}$};
      \node[right=of kr2] (rn2) {$S_{i',r_{i'}+1}$};

      \node (anon3) at ($(anon1)!0.5!(kr1)$) {$\ldots$};
      \node (anon4) at ($(anon1)!0.3!(anon2)$) {.};
      \node (anon5) at ($(anon4)!0.5!(anon2)$) {.};
      \node[xshift=5mm] (anon6) at ($(anon5)!0.4!(anon2)$) {$\scriptstyle\times$};
      \node (anon7) at ($(anon2)!0.5!(kr2)$) {$\ldots$};
      \node (anon8) at ($(anon4)!0.5!(anon5)$) {.};

      \node (tmp1) at ($(lsib1)!0.4!(lsib2)$) {};
      \node[yshift=-5mm,xshift=-7mm] (tmpD1) at (tmp1) {$\ldots$};
      \node (tmp2) at ($(bln1)!0.6!(bln2)$) {};
      \node (tmp2b) at ($(bln1)!0.5!(bln2)$) {};
      \node[yshift=-5mm,xshift=-7mm] (tmpD2) at (tmp2) {$\ldots$};

      \node[xshift=-2.5mm] (tmp3) at ($(rn1)!0.3!(rn2)$) {};
      \node[xshift=-4mm,yshift=6mm] (tmp4) at ($(rn1)!0.7!(rn2)$) {};
      \node (tmp5) at ($(rn1)!0.8!(rn2)$) {};
      \node[xshift=7mm] (tmp6) at (tmp4) {$\ldots$};
      \node[xshift=-7mm] (tmp7) at (tmp5) {$\scriptstyle\times$};

      \node[right=of anon4] (x1) {.};
      \node[right=1.5cm of x1,yshift=3mm] (x2) {.};
      \node[below right=of x1,yshift=3mm] (x3) {.};
      \node[right=1.5cm of x3] (x4) {$\scriptstyle\times$};
      \node[below right=of x3,yshift=5mm] (x5) {.};
      \node[below right=of x5,yshift=5mm] (x6) {$\scriptstyle\times$};
      \node[below left=of x5,yshift=5mm,xshift=5mm] (x7) {.};

      \node[below right=of anon2,yshift=1mm,xshift=-13mm] (x8) {$\ldots$};
      \node[below=of lsib2,yshift=5mm,xshift=2mm] (x9) {$\ldots$};
      \node[below=of rn2,yshift=5mm,xshift=3mm] (x10) {$\ldots$};

      \begin{scope}[on background layer]
        \draw[fill=black!5,draw=white] \convexpath{anon1,kr1}{4mm};
        \draw[decorate,decoration={brace}]                  ([xshift=-3mm,yshift=4mm] $(anon1)$)
          -- node[above=1mm] {$\op{good}(i)$}               ([xshift=3mm,yshift=4mm] $(kr1)$);

        \draw[fill=black!5,draw=white] \convexpath{anon7,kr2}{4mm};
        \draw[decorate,decoration={brace}]                  ([xshift=3mm,yshift=-4mm] $(kr2)$)
          -- node[below=1mm] {$\op{good}(i')$}              ([xshift=-3mm,yshift=-4mm] $(anon7)$);
      \end{scope}
      \draw[black!30, thick, dashed]
        ([yshift=3mm] $(bln1)!0.5!(anon1)$) -- ([yshift=4mm] $(bln2)!0.5!(anon2)$);
      \draw[black!30, thick, dashed]
        ([yshift=4mm] $(anon2)!0.45!(anon7)$) -- ([yshift=-8mm] $(anon2)!0.5!(anon7)$);
      \draw[black!30, thick, dashed]
        ([yshift=3mm] $(kr1)!0.5!(rn1)$) -- ([yshift=-8mm] $(kr2)!0.5!(rn2)$);
      \draw[black!30, thick, dashed]
        ([yshift=4mm] $(bln2)!0.5!(anon2)$) -- ([yshift=4mm] $(anon2)!0.45!(anon7)$);

      \draw[->] (bln1) edge[decorate, decoration={snake, segment length=3mm, amplitude=0.5mm,post length=1mm}] node {} (bln2);
      \draw[->] (lsib1) edge[decorate, decoration={snake, segment length=3mm, amplitude=0.5mm,post length=1mm}] node {} (lsib2);

      \draw[->] (rn1) edge[swap, decorate, decoration={snake, segment length=3mm, amplitude=0.5mm,post length=1mm}] node {$\scriptscriptstyle M$} (tmp7);
      \draw[->] (tmp3) edge[decorate, decoration={snake, segment length=3mm, amplitude=0.5mm,post length=1mm}] node {} (tmp6);
      \draw[->] (tmp4) edge[decorate, decoration={snake, segment length=3mm, amplitude=0.5mm,post length=1mm}] (rn2);

      \draw[->] (anon1) edge[decorate, decoration={snake, segment length=3mm, amplitude=0.5mm,post length=1mm}] (anon4);
      \draw[->] (anon4) edge[decorate, decoration={snake, segment length=3mm, amplitude=0.5mm,post length=1mm}] (anon8);
      \draw[->] (anon8) edge[decorate, decoration={snake, segment length=3mm, amplitude=0.5mm,post length=1mm}] (anon5);
      \draw[->] (anon4) edge[swap,near start, decorate, decoration={snake, segment length=3mm, amplitude=0.5mm,post length=1mm}] node {$\scriptscriptstyle M$} (tmp2b);
      \draw[->] (anon5) edge[decorate, decoration={snake, segment length=3mm, amplitude=0.5mm,post length=1mm}] (anon6);
      \draw[->] (anon5) edge[swap,near start, decorate, decoration={snake, segment length=3mm, amplitude=0.5mm,post length=1mm}] node {$\scriptscriptstyle M$} (anon2);

      \draw[->] (x1) edge[decorate, decoration={snake, segment length=3mm, amplitude=0.5mm,post length=1mm}] node {$\scriptscriptstyle M$} (anon8);
      \draw[->] (x1) edge[decorate, decoration={snake, segment length=3mm, amplitude=0.5mm,post length=1mm}] node {} (x3);
      \draw[->] (x2) edge[decorate, decoration={snake, segment length=3mm, amplitude=0.5mm,post length=1mm}] node {$\scriptscriptstyle M$} (x3);
      \draw[->] (anon3) edge[decorate, decoration={snake, segment length=3mm, amplitude=0.5mm,post length=1mm}] node {} (x1);
      \draw[->] (kr1) edge[decorate, decoration={snake, segment length=3mm, amplitude=0.5mm,post length=1mm}] node {} (x2);
      \draw[->] (x2) edge[decorate, decoration={snake, segment length=3mm, amplitude=0.5mm,post length=1mm}] node {} (x4);
      \draw[->] (x3) edge[decorate, decoration={snake, segment length=3mm, amplitude=0.5mm,post length=1mm}] node {} (x5);
      \draw[->] (x5) edge[decorate, decoration={snake, segment length=3mm, amplitude=0.5mm,post length=1mm}] node {} (x6);
      \draw[->] (x5) edge[decorate, decoration={snake, segment length=3mm, amplitude=0.5mm,post length=1mm}] node {$\scriptscriptstyle M$} (x7);
      \draw[->] (x7) edge[decorate, decoration={snake, segment length=3mm, amplitude=0.5mm,post length=1mm}] node {$\scriptscriptstyle M$} (anon7);
      \draw[->] (x7) edge[decorate, decoration={snake, segment length=3mm, amplitude=0.5mm,post length=1mm}] node {} (kr2);

      \draw[->] (anon2) edge[decorate, decoration={snake, segment length=3mm, amplitude=0.5mm,post length=1mm}] node {} (x8);
      \draw[->] (lsib2) edge[decorate, decoration={snake, segment length=3mm, amplitude=0.5mm,post length=1mm}] node {} (x9);
      \draw[->] (rn2) edge[decorate, decoration={snake, segment length=3mm, amplitude=0.5mm,post length=1mm}] node {} (x10);

      \draw[->] (tmp1) edge[swap,decorate, decoration={snake, segment length=3mm, amplitude=0.5mm,post length=1mm}] node {$\scriptscriptstyle M$} (tmpD1);
      \draw[->] (tmp2) edge[swap,decorate, decoration={snake, segment length=3mm, amplitude=0.5mm,post length=1mm}] node {$\scriptscriptstyle M$} (tmpD2);

    \end{tikzpicture}
  \end{center}
  \caption{An abstract sketch of the run-DAG, the different entities in Definition~\ref{def:dpanbadefs},
  and the statement of Lemma~\ref{lem:goodpieces}. On the levels $i,i' \ge i_0$, the sets with the smallest active rank
  $k$ ($S_{i,r_i}$ and $S_{i',r_{i'}}$) are illustrated with their bad left neighbors
  ($S_{i,l_i}$ and $S_{i', l_{i'}}$) from which bad paths in the run-DAG start. Since
  $S_{i,r_i}$ has rank $k$, its left subtree boundary set (in the tree view) has rank $<k$.
  Since no rank $<k$ is active after $i_0$, there must be a bad path starting from the
  left subtree boundary set of $S_{i,r_i}$, and hence the bad left neighbor of $S_{i,r_i}$
  is between (including) $S_{i,\lsibling(r_i)}$ and (excluding) $S_{i,r_i}$. The unions of the highlighted intervals
  form the sets $\op{good}(i)$ and $\op{good}(i')$, respectively.
  The time $i'$ is chosen such that all paths starting in the interval that defines $\op{good}(i)$ have
  used at least one marked edge of the run-DAG. If the bad left neighbor does not exist, then the
  interval defining the set $\op{good}$ goes up to the left border. Paths are marked with $M$ when they
  contain a marked edge, terminated paths in the run-DAG are marked with $\times$.
  Notice that paths can leave the interval boundary, but cannot enter.
  }
    \label{fig:intervallschlauch}
\end{figure}

\begin{lemma}
  \label{lem:goodpieces}
  Let $w, i_0$ and $\op{good}(i)$ be defined as in Definition~\ref{def:dpanbadefs}.
  Then for every time $i \geq i_0$ it holds that $\op{good}(i)\neq\emptyset$, and there is
  some $i'>i$ such that for every $q \in \op{good}(i')$ there is some $p \in \op{good}(i)$
  such that there is a path from $p$ to $q$ that is labelled by the substring $w(i)\ldots
  w(i'-1)$ of $w$ and visits at least one accepting state.
\end{lemma}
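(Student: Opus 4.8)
The plan is to combine a structural analysis of how the index interval defining $\op{good}$ evolves along the run-DAG with a compactness (König) argument. Before starting I would record three routine facts. \emph{Monotonicity of the run-DAG:} since $\op{step}$, $\op{prune}$, $\op{merge}$ and $\op{normalize}$ all preserve the left-to-right order of the sets in the tuple, whenever $S_{j,m}\to S_{j+1,m'}$ and $S_{j,n}\to S_{j+1,n'}$ with $m\le n$ we have $m'\le n'$, and the left child of a set never ends up strictly right of its right child; also each set has at most two outgoing edges (one for its accepting, one for its non-accepting restricted successors). \emph{Backward existence:} every set of $t_j$ has an incoming run-DAG edge, because each state of $Q_{t_j}$ is the restricted successor of a unique (``leftmost'') predecessor in $Q_{t_{j-1}}$ and the restricted successors partition $\Delta(Q_{t_{j-1}},w(j-1))$; hence backward traces never get stuck. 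Finally I would enlarge $i_0$ once more so that rank $k$ is present at every time $\ge i_0$: $k$ is green infinitely often, hence present infinitely often, and after $i_0$ once present it never disappears (never red, hence never deleted or absorbed by a smaller rank; never absorbed in $\op{merge}$, since after $i_0$ all ranks $\le k$ sit in singleton merge intervals; never shifted down by $\op{normalize}$, since no rank $<k$ is ever removed).

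For the statement itself, let $v_i$ be the rank-$k$ node, at index $r_i$, whose subtree in the rank tree occupies the indices $\lsibling(r_i)+1,\dots,r_i$. The first step is the claim $\lsibling(r_i)\le l_i$, i.e.\ the bad left neighbour lies inside (or on the left boundary of) the subtree of $v_i$: every proper ancestor of $v_i$ has rank $<k$, hence is never green after $i_0$, hence always has a non-empty right child, and the edge to its right-child descendant is always unmarked; following these right children forever yields a bad path, so $S_{i,\lsibling(r_i)}$ starts one and $l_i\ge\lsibling(r_i)$. In particular $S_{i,r_i}\subseteq\op{good}(i)$, so $\op{good}(i)\ne\emptyset$, and $\op{good}(i)$ is a suffix of the subtree of $v_i$. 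Next I would show that \emph{no bad path starts in $\op{good}(i)$}: for indices in $\{l_i+1,\dots,r_i-1\}$ this is immediate from maximality of $l_i$; for index $r_i$ itself, note that whenever $k$ is green the right child of the rank-$k$ set is empty, so its only possible outgoing edge (into its accepting-successor descendant) is marked, while whenever $k$ is not green its unique unmarked outgoing edge leads again to the rank-$k$ set on the next level — so, since $k$ is green infinitely often, the rank-$k$ sets cannot carry an infinite unmarked path.

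The technical heart, and the step I expect to be the main obstacle, is \emph{backward-closedness} of $\op{good}$: for $j\ge i_0$, no run-DAG edge from a set of index $\le l_j$ or of index $>r_j$ on level $j$ ends in a set of index in $\{l_{j+1}+1,\dots,r_{j+1}\}$ on level $j+1$. For the right boundary one argues, via monotonicity, that every set of index $>r_j$ maps (if at all) to an index $>r_{j+1}$, using that $S_{j+1,r_{j+1}}$ receives states only from the subtree of $v_j$: from the non-accepting successors of $S_{j,r_j}$ if $k$ is not green, and from the accepting successors of $S_{j,r_j}$ together with $\op{merge}$-absorbed sets from $v_j$'s subtree if $k$ is green (the $\op{merge}$ constraints force the rank-$k$ set to be the rightmost of its interval, and since all ranks $\le k$ form singleton intervals that interval cannot reach outside the subtree of $v_j$). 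For the left boundary, the bad path out of $S_{j,l_j}$ first uses the unmarked right-child edge into some set $S_{j+1,p^\ast}$ which again starts a bad path and lies strictly left of $r_{j+1}$, so $p^\ast\le l_{j+1}$, and by monotonicity any predecessor of index $\le l_j$ maps to an index $\le p^\ast\le l_{j+1}$. The delicate cases are: (a) $k$ green with an aggressive $\op{merge}$ — one must exclude that the bad right child of $S_{j,l_j}$ is merged into the rank-$k$ set, which is precisely forbidden because otherwise $S_{j+1,r_{j+1}}$ would carry a bad path, contradicting the previous paragraph; and (b) $S_{j,r_j}$ having no restricted successor at all, which after $i_0$ forces $k$ to land, green, on a descendant of some non-empty set still inside $v_j$'s subtree (landing on anything of rank $<k$ would make $k$ red).

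Finally, the conclusion via König's Lemma. Since $\op{good}(i)\ne\emptyset$ and no bad path starts in $\op{good}(i)$, the tree of unmarked run-DAG paths rooted at the sets of $\op{good}(i)$ is finitely branching (at most two children per node) and has no infinite branch, hence is finite, say of depth $N$; thus every unmarked run-DAG path starting in $\op{good}(i)$ has at most $N$ edges. Put $i':=i+N+1$. Given $q\in\op{good}(i')$, trace states $p_i,p_{i+1},\dots,p_{i'}=q$ backwards by taking at each step the unique leftmost predecessor; the induced run-DAG path ends in $\op{good}(i')$, so by backward-closedness it stays inside $\op{good}$ all the way down, whence $p_i\in\op{good}(i)$ and the path has $N+1>N$ edges, so it must contain a marked edge; the state $p_{j+1}$ crossing that marked edge lies in $\Delta_{t_j}(S_{j,\op{idx}(p_j)},w(j))\cap S_{j+1,\op{idx}(p_{j+1})}\subseteq F$, hence is accepting. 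This gives the required path from $p:=p_i\in\op{good}(i)$ to $q$, labelled by $w(i)\cdots w(i'-1)$ and visiting an accepting state, and $\op{good}(i)\ne\emptyset$ was already established.
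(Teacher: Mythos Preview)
Your proposal is correct and follows essentially the same route as the paper's proof: establish $l_i\ge\lsibling(r_i)$ via the bad path starting at the left subtree boundary, show no bad path starts inside $\op{good}(i)$, argue that no run-DAG edge can enter the interval $(l_j,r_j]$ from outside (your ``backward-closedness''), and then use a finiteness argument to choose $i'$. The only difference is cosmetic: you make the König step for bounding the length of unmarked paths and the backward state-tracing explicit, and you spell out the monotonicity and merge edge cases more carefully, whereas the paper absorbs these into one sentence (``eventually every path \ldots must either terminate or eventually go through a marked edge; pick some $i'$ such that this is the case'').
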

\begin{proof}
  Fix a time $i \geq i_0$ and let $k$, $r_i$ and $l_i$ also be defined as in
  Definition~\ref{def:dpanbadefs}.

  Clearly, $\op{good}(i)$ is not empty, as it at least contains the set $S_{i,r_i}$.
  Notice that by choice of time $i_0$ and rank $k$, the set with rank
  $k$ will never be merged together with some set to the right, i.e. a set with index
  $>r_i$, as this would violate the constraints of $\op{merge}$ (because after
  $i_0$ no rank $<k$ can be active). For the same reason, it will also be never merged
  with its left subtree boundary set (if any), which by definition has a smaller rank and
  therefore must be preserved unchanged.

  Next, observe that the set with rank $k$ can also never be merged with a set that lies
  on a bad path (and also cannot be a bad path itself), because lying on a bad path means
  that during $\op{step}$ the set of non-accepting normalized successors (the right child
  set) would never become empty (leading to the unmarked edges in the run-DAG), hence it
  would always inherit the rank $k$ and forever prevent $k$ from becoming green (which
  requires the right successor set $\Delta_{t_i}(S_{i,r_i},w(i))\cap\overline{F}$ to be
  empty), contradicting the choice of $k$.

  As after time $i_0$ no rank $<k$ will be active again, it means that the left subtree
  boundary set of $k$ (if it exists) will be neither red nor green ever again. But this implies, that in
  $\op{step}$ it always has a non-empty set of non-accepting successor states, which
  inherits its rank and therefore this set, by definition, must lie on a bad path in the
  run-DAG. Hence, $l_i \geq \lsibling_{t_i}(r_i)$.

  This motivates the definition of $\op{good}(i) := \bigcup_{j=l_i+1}^{r_i} S_{i,j}$ as
  the union of all sets whose successors could (in principle) be merged with the
  successors of $S_{i,r_i}$ without violating the assumptions.

  As, by definition, no set contained in $\op{good}(i)$ can lie on a bad path, it means that
  eventually every path of the run-DAG starting in some set inside $\op{good}(i)$ must
  either terminate or eventually go through a marked edge. Pick some $i' > i$ such that
  this is the case.

  Observe that tracing the left and right boundary indices $l_i$ and
  $r_i$ over time gives us an interval $(l_i,r_i]$ of indices for each level of the run-DAG
  such that no edge in the run-DAG from outside this interval can ever go inside this
  interval (if an edge enters from the left, then it contains a bad path and if an edge
  enters from the right, then the set with rank $k$ is merged with sets to the right, both
  cases we excluded).

  For this reason, every state $q \in \op{good}(i')$ must have been reached from some
  state $p\in \op{good}(i)$. As every path of the run-DAG starting in a set contained in
  $\op{good}(i)$ must go through at least one marked edge to reach a set in
  $\op{good}(i')$, it means that every path from a state $p\in \op{good}(i)$ to some state
  $q\in\op{good}(i')$ agreeing with the run-DAG must contain at least one accepting state,
  because a run-DAG edge is only marked, if only accepting successors of the source set
  end up in the target set of the edge.
\end{proof}

Using this result, we can show how an accepting run of the NBA can be constructed from an accepting
run of the DPA:

\begin{lemma}
  \label{lem:dpa_to_nba}
  $w\in L(\mc{B}) \implies w \in L(\mc{A})$
\end{lemma}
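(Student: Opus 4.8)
The plan is to use Lemma~\ref{lem:goodpieces} as a ``pump'': it will hand us an infinite sequence of finite run segments of $\mc{A}$, each visiting an accepting state, and König's Lemma will glue these into a single accepting run of $\mc{A}$ on $w$. Concretely, I would first take the (unique, since $\mc{B}$ is deterministic) accepting run of $\mc{B}$ on $w$, read off the witnessing rank $k$ and a time $i_0$ as in Definition~\ref{def:dpanbadefs}. Then I iterate Lemma~\ref{lem:goodpieces}: set $j_0 := i_0$ and, given $j_m$, let $j_{m+1}$ be an index $i' > j_m$ supplied by the lemma for $i = j_m$. This yields a strictly increasing sequence $j_0 < j_1 < j_2 < \cdots$ with $\op{good}(j_m) \neq \emptyset$ for all $m$, and with the property that for every $m$ and every $q \in \op{good}(j_{m+1})$ there are some $p \in \op{good}(j_m)$ and a path of $\mc{A}$ from $p$ to $q$ labelled $w(j_m)\cdots w(j_{m+1}-1)$ that visits at least one accepting state.

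Next I set up the König argument. Consider the layered directed graph $D$ whose vertices on layer $m$ are the states of the finite non-empty set $\op{good}(j_m)$, with an edge from $p$ on layer $m$ to $q$ on layer $m+1$ whenever some path of $\mc{A}$ as above leads from $p$ to $q$ on $w(j_m)\cdots w(j_{m+1}-1)$ and visits an accepting state. By Lemma~\ref{lem:goodpieces} every layer-$(m+1)$ vertex has in-degree at least one, and since all layers are finite and non-empty, a standard backward-König argument produces an infinite forward path $q_0 \to q_1 \to q_2 \to \cdots$ in $D$: call a vertex \emph{productive} if it reaches vertices on arbitrarily high layers; a productive vertex has a productive successor (its finitely many successors cannot all be bounded in reach), and some layer-$0$ vertex is productive, since otherwise the non-empty layers would all lie below a fixed bound. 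Picking for each edge $q_m \to q_{m+1}$ a witnessing path of $\mc{A}$ and concatenating these produces an infinite run of $\mc{A}$ on the suffix $w(j_0)w(j_0+1)\cdots$ starting in $q_0$; since each segment contributes an accepting state, this run visits accepting states infinitely often.

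Finally I prepend a finite initial run. Every macrostate $t_{i+1}$ arises from $t_i$ by applying $\op{step}$ (which only uses successors, and $\Delta_{t_i}(\cdot,\cdot) \subseteq \Delta(\cdot,\cdot)$) followed by $\op{prune}$, $\op{merge}$, $\op{normalize}$, none of which introduces new states; hence every state occurring in $t_{i+1}$ has a $\Delta$-predecessor occurring in $t_i$. By backward induction along $t_0 = (Q_0), t_1, \ldots, t_{j_0}$, every state of $Q_{t_{j_0}}$ --- in particular $q_0 \in \op{good}(j_0) \subseteq Q_{t_{j_0}}$ --- is reachable in $\mc{A}$ from $Q_0$ along $w(0)\cdots w(j_0-1)$. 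Splicing such a prefix run before the infinite fragment gives a full run of $\mc{A}$ on $w$ starting in $Q_0$ and visiting accepting states infinitely often, so $w \in L(\mc{A})$.

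The part I expect to be the real obstacle is the König step, more precisely its correct formulation: Lemma~\ref{lem:goodpieces} only provides, between consecutive checkpoints $j_m$ and $j_{m+1}$, a \emph{backward} reachability relation (each state of $\op{good}(j_{m+1})$ is reached from \emph{some} state of $\op{good}(j_m)$), so the work is to convert these local backward relations into one globally coherent \emph{forward}-infinite path through all checkpoints, which is exactly where finiteness and non-emptiness of every $\op{good}(j_m)$ is used. The remaining bookkeeping --- that accepting states really accumulate infinitely often (the accepting position witnessed in a block could coincide with that block's right endpoint, shared with the next block, but each position belongs to at most two blocks, so infinitely many distinct accepting positions still appear) and that $Q_{t_{j_0}}$ is reachable from $Q_0$ --- is routine.
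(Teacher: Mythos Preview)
Your proof is correct and follows essentially the same approach as the paper: iterate Lemma~\ref{lem:goodpieces} to obtain a sequence of checkpoints, build a finitely branching layered graph, apply K\"onig's lemma to extract an infinite path, and concatenate the witnessing run segments into an accepting run of $\mc{A}$. The only cosmetic difference is that the paper folds the initial segment from $Q_0$ to $\op{good}(i_0)$ into level~$0$ of the same DAG, whereas you handle that prefix separately after the K\"onig step; both work.
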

\begin{proof}
  Let $w\in L(\mc{B})$, rank $k$, time $i_0$ and sets $\op{good}(i)$ as
  defined in Definition~\ref{def:dpanbadefs}.

      Then, by Lemma~\ref{lem:goodpieces}, there is a set $\op{good}(i_0)$ that contains the
  current set with rank $k$ and thus is non-empty, and furthermore has the property that
  there exists some $i_1 > i_0$ such that $\op{good}(i_1)$ also is non-empty and all
  states in $\op{good}(i_1)$ are reached from at least one state in $\op{good}(i_0)$ by
  some path that visits some accepting state. By iteratively using
  Lemma~\ref{lem:goodpieces}, we can construct an infinite sequence of times $i_0, i_1,
  i_2, \ldots$ such that every pair of times $i_j$ and $i_{j+1}$ satisfies these
  properties.

  We now can construct a finitely branching infinite DAG in which the edges are labelled
  by finite run segments of the NBA.
  Level 0 of the DAG has one node for each initial state of the NBA.
  For $j \ge 1$, level $j$ has one node for each state in $\op{good}(i_{j-1})$.
  From level $0$ to level $1$, there is an edge from $q_0$ to $q$ if there is a run of $\mc{A}$ from $q_0$ to $q$ on the input $w(0) \cdots w(i_0-1)$.
  For $j \ge 1$, there is an edge from $p \in \op{good}(i_{j-1})$ on level $j$ to $q \in
  \op{good}(i_{j})$ on level $j+1$, if there is a run from $p$ to $q$ on the input $w(i_{j-1}) \cdots w(i_j -1)$ that visits an accepting state. Label the edge by this run.
   By Lemma~\ref{lem:goodpieces}, for each $q$ on level $j+1$ there is some $p$ on level $j$ such that there is an edge from $p$ to $q$.

   By König's lemma, there is an infinite path through this DAG starting on level 0. The concatenations of edge labels of this path
   yields a run of $\mc{A}$ on $w$ by
  construction. Starting from level 1, all edge labels contain at least one accepting
  state, and therefore this run must be accepting.
\end{proof}

By Lemmas~\ref{lem:nba_to_dpa} and \ref{lem:dpa_to_nba}, we have shown that $L(\mc{A}) =
L(\mc{B})$, which completes the proof of Theorem~\ref{thm:correctness}.
 }

\end{document}